\title{The expressive power of modal logic with inclusion atoms}
\author{Lauri Hella
\institute{University of Tampere\\ Tampere, Finland}
\email{lauri.hella@uta.fi}
\and
  Johanna Stumpf
\institute{TU Darmstadt\\
Darmstadt, Germany}
\email{johanna.stumpf@stud.tu-darmstadt.de}
}
\theoremstyle{plain}
\newtheorem{Satz}{Satz}[section]  
\newtheorem{Lemma}[Satz]{Lemma}		
\newtheorem{Corollary}[Satz]{Corollary}			
\newtheorem{Theo}[Satz]{Theorem}	
\newtheorem{theorem}[Satz]{Theorem}		
\newtheorem{proposition}[Satz]{Proposition}	
\newtheorem{Proposition}[Satz]{Proposition}	
\theoremstyle{definition}		
\newtheorem{Definition}[Satz]{Definition}		
\newtheorem{example}[Satz]{Example}		
\theoremstyle{remark}			
\newtheorem{Remark}[Satz]{Remark}
\newtheorem{remark}[Satz]{Remark}	
\DeclareMathOperator{\nedis}{\mathsmaller{\overset{{\mathsmaller\bowtie}}{\vee}}}
\DeclareMathOperator{\bnedis}{\raisebox{-.2em}{$\bigvee$}\hspace{-.85em}\raisebox{.65em}{$\bowtie$}}
\DeclareMathOperator{\dep}{=\hspace{-.08em}}
\DeclareMathOperator{\md}{\text{\textnormal{md}}}
\DeclareMathOperator{\nmodels}{\not \models}
\DeclareMathOperator{\neop}{\triangledown \hspace{-0.2em} }
\DeclareMathOperator{\occv}{\text{\textnormal{occ}}_{\triangledown}}
\newcommand{\newLogic}[1]{{\ensuremath{\mbox{{\usefont{OMS}{cmsy}{m}{n}#1}}}}\xspace}
\newcommand{\MDL}{\newLogic{MDL}}
\newcommand{\ML}{\newLogic{ML}}
\newcommand{\EMDL}{\newLogic{EMDL}}
\newcommand{\MINC}{\newLogic{MINC}}
\newcommand{\cL}{\mathcal{L}}
\newcommand{\cP}{\mathcal{P}}
\newcommand{\cK}{\mathcal{K}}
\newcommand{\cT}{\mathcal{T}}
\begin{document}
\maketitle

\begin{abstract}
Modal inclusion logic is the extension of basic modal logic with inclusion atoms, and its semantics is defined on Kripke models with teams. A team of a Kripke model is just a subset of its domain. 
In this paper we give a complete characterisation for the expressive power of modal inclusion logic: a class of Kripke models with teams is definable in modal inclusion logic if and only if it is closed under $k$-bisimulation for some integer $k$, it is closed under unions, and it has the empty team property. We also prove that the same expressive power can be obtained by adding a single unary nonemptiness operator to modal logic.
Furthermore, we establish an exponential lower bound for the size of the translation from modal inclusion logic to modal logic with the nonemptiness operator. 

\end{abstract}

\section{Introduction}
%


Modal inclusion logic, \MINC, is the extension of basic modal logic \ML by \emph{inclusion atoms} of the form 
$$
	\theta:=\varphi_1\ldots\varphi_n\subseteq \psi_1\ldots\psi_n,
$$	
where $\varphi_1\ldots\varphi_n,\psi_1\ldots\psi_n$ are \ML-formulas. The intended meaning of the atom $\theta$ is that, in a given model, any combination of truth values that the tuple $\varphi_1\ldots\varphi_n$ gets, is also realized by the tuple $\psi_1\ldots\psi_n$. 

Note that this idea becomes trivial if the notion of model is as in regular Kripke semantics:  the atom $\theta$ is true in an element $w$ of a Kripke model $K$ just in case $K,w\models\varphi_i\leftrightarrow\psi_i$ for each $i$. In order to get a more meaningful semantics we need to consider sets of elements of Kripke models instead of a single element.  
Thus, the semantics of \MINC is defined on pairs $(K,T)$, where $K$ is a Kripke model and $T$ is a \emph{team} of $K$, i.e., $T$ is a subset of the domain of $K$. The inclusion atom $\theta$ is defined to be true in a team $T$ of a Kripke model $K$ if and only if
for every $w\in T$ there is $v\in T$ such that, for all $1\le i\le n$,
$$
	K,w\models\varphi_i
	\iff K,v\models\psi_i.
$$

\emph{Team semantics} was originally introduced by Hodges \cite{hodges}, who used sets of assignments of first-order variables to define a compositional semantics for the independence-friendly logic of Hintikka and Sandu \cite{if}.
Later Väänänen observed that a dependence between variables can be regarded as an atomic property of teams. In \cite{vbuch}, he introduced dependence logic which is obtained by adding such \emph{dependence atoms} $\dep(x_1, \ldots , x_{n-1}, x_n)$ to first-order logic. The intuitive meaning of this atom is that the value of the variable $x_n$ is functionally determined by the values of~$x_1, \ldots, x_{n-1}$.
Dependence logic was defined with the intention to describe dependences occurring in various scientific disciplines, such as physics, statistics, and even social choice theory.


A team can also be seen as a relational database. In database theory a great variety of constraints over relations have been studied. Galliani \cite{galliani} imported inclusion dependencies of database theory into the team semantics setting by defining inclusion 
atoms of the form~$x_1 \ldots x_n \subseteq y_1 \ldots y_n$. This atom is satisfied in a team $T$ if for every assignment $s\in T$ there is an assignment $t\in T$ such that $(s(x_1),\ldots,s(x_n))=(t(y_1),\ldots,t(y_n))$.
Thus, modal inclusion atoms are the natural variant of the first-order inclusion atoms obtained by replacing the values of first-order variables under assignments with the truth values of formulas in elements of Kripke models.

Team semantics was introduced in the modal context for the first time by Väänänen, who in  \cite{vaananen} defined \emph{modal dependence logic}, \MDL. 
Initially, the research on modal dependence logic was mainly concerned with complexity questions. Sevenster showed  in \cite{sevenster} that the satisfiability problem for \MDL is complete for NEXPTIME.
In \cite{compl}, Lohmann and Vollmer proved complexity results for syntactic fragments of \MDL. Ebbing and Lohmann  \cite{modelcheck} showed that the model checking problem for \MDL is NP-complete. 
Also, in recent years research on complexity questions has been active; see e.g. \cite{mil}, \cite{gandalf} and \cite{yang}. The complexity of the satisfiability problem for modal inclusion logic is studied in \cite{HKMV15}.



The expressive power of \MDL has been studied more systematically only recently. In \cite{emdl}, Ebbing et al.~observed that purely propositional dependence atoms $\dep(p_1,\ldots,p_n,q)$ are too weak to express temporal dependences. To resolve this issue, the authors introduced \emph{extended modal dependence logic}, \EMDL, which allows arbitrary modal formulas in place of the proposition symbols $p_1,\ldots,p_n,q$ in dependence atoms. Furthermore, it was shown in \cite{emdl} that, in terms of expressive power, \EMDL is a sublogic of $\ML(\varovee)$, the extension of modal logic with intuitionistic disjunction $\varovee$.

In \cite{epmdl}, Hella  et~al.~proved that the converse is also true, and hence \EMDL and $\ML(\varovee)$ have equal expressive power. Moreover, they gave a complete characterisation for the expressive power of these two logics: a class of pairs $(K,T)$, where $K$ is a Kripke model and $T$ is a team of $K$, is definable in \EMDL if and only if it is downwards closed, and closed under team $k$-bisimulation for some $k$. Here \emph{downwards closure} is the following property: 
\begin{itemize}
\item for all formulas $\varphi$, if $K,T\models \varphi$ and $S\subseteq T$, then $K,S\models \varphi$. 
\end{itemize}
\emph{Team $k$-bisimulation} is defined by a straighforward lifting of the usual $k$-bisimulation relation to the setting of team semantics: 
\begin{itemize}
\item $T$ and $T'$ are team $k$-bisimilar if and only if every element of $T$ is $k$-bisimilar with some element of $T'$, and vice versa.
\end{itemize}
In addition to these results, \cite{epmdl} proved an exponential lower bound for the size of translation from \EMDL to  $\ML(\varovee)$ by showing that any formula $\ML(\varovee)$ which defines the dependence atom $\dep(p_1,\ldots, p_n,q)$ contains at least $ 2^{n}$ occurrences of the intuitionistic disjunction $\varovee$.  

In this paper we will analyse the expressive power of modal inclusion logic using a similar approach as was used in \cite{epmdl} for extended modal dependence logic. 
As explained above, \EMDL is characterised by two closure properties: downwards closure, and closure under team $k$-bisimulation (for some $k$). 
We first show that closure under $k$-bisimulation holds for all \MINC-definable classes, as well.  However, it is easy to see that downwards closure fails in the case of \MINC. On the other hand, as pointed out by Galliani~\cite{galliani}, first-order inclusion logic is closed under unions and has the empty team property. We show that these properties also hold for \MINC. \emph{Closure under unions} is the following property:
\begin{itemize}
\item for all formulas $\varphi$, if $K,T_i\models \varphi$ for all $i\in I$, then $K,\bigcup_{i\in I} T_i\models \varphi$. 
\end{itemize}
The \emph{empty team property} is the requirement that $K,\emptyset\models\varphi$ for all Kripke models $K$ and formulas $\varphi$. 

Thus, we have identified three natural closure properties of \MINC. In our first main result we prove that \MINC is expressively complete with respect to these closure properties: any class of pairs $(K,T)$ is closed under team $k$-bisimulation for some $k$, closed under unions,  and has the empty team property, if and only if it is definable in \MINC. 




As our second main result, we prove that \MINC is equivalent with an extension $\ML(\triangledown)$ of modal logic by a new unary operator. The semantics of this \emph{nonemptiness operator} $\triangledown$ is given by the clause:
$$
	K,T\models\triangledown\varphi\;\iff\; T=\emptyset\;\text{ or }\;K,S\models\varphi
	\text{ for some nonempty $S\subseteq T$}.
$$
We show that  closure under $k$-bisimulation, closure under unions, and the empty team property hold for $\ML(\triangledown)$. Furthermore, we show that inclusion atoms can be replaced by the use of the nonemptiness operator $\triangledown$ in the proof of definability of classes that have all the three closure properties. Consequently, the expressive power of $\ML(\triangledown)$ satisfies the same characterisation as that of \MINC.


Since \MINC and $\ML(\triangledown)$ have equal expressive power, every formula of the former can be translated to an equivalent formula of the latter (and vice versa). In our third main result, we prove an exponential lower bound for this translation. More precisely, we show that any formula of $\ML(\triangledown)$ that defines the inclusion atom $p_1\ldots p_n \subseteq q_1\ldots q_n$ contains at least $ 2^{n}$ occurrences of the operator $\triangledown$. 

Altogether, our results show that the relationship between the logics \MINC and $\ML(\triangledown)$ is completely analogous to that between \EMDL and $\ML(\varovee)$. However, the method of proof in the exponential lower bound result is quite different in the two cases: in the case of \EMDL and $\ML(\varovee)$, the proof in \cite{epmdl} makes use of a semantic invariant $\mathrm{Dim}(\varphi)$ (upper dimension of $\varphi$) of formulas, while in the case of \MINC and $\ML(\triangledown)$ the proof is obtained by analysing the semantic games of $\ML(\triangledown)$-formulas.




\section{Basic definitions and preliminaries}

In this section, we define the syntax and team semantics of modal logic.  Then we consider
Hintikka-formulas and $k$-bisimulation, which are important tools in analysing the expressive power of modal inclusion logic. Finally, we take a look on the basic closure properties of modal logic with team semantics.

\subsection{Syntax and team semantics}
Basic modal logic is defined as propositional logic with additional unary modal operators. These modal operators are~$\Diamond$, called diamond, and~$\Box $, called box. Box is the dual operator of diamond and its semantics is defined to correspond to~$\neg \Diamond \neg \varphi$. Diamond can be interpreted in such a way that~$\Diamond \varphi$ means `it is possibly the case that~$\varphi$'. Then~$\Box \varphi$ means `it is not possible that not ~$\varphi$' and therefore `necessarily~$\varphi$'. 

It will be useful to assume that all formulas in modal logic are in negation normal form. This means that the negation operator is only applied to proposition symbols. Thus we follow \cite{epmdl} by defining the syntax of basic modal logic \ML as follows:

\begin{Definition}
Let $\Phi$ be a set of proposition symbols. The set of formulas of ~$\ML(\Phi)$ is generated by the following grammar
\[ \varphi := p \mid \neg p \mid (\varphi \land \varphi ) \mid ( \varphi \lor \varphi ) \mid \Diamond \varphi \mid \Box \varphi, \]
where $p \in \Phi$.
\end{Definition}

A \emph{Kripke model} $K$ is a triple $(W, R, V)$, where $W$ is a set, and $R$ is a binary relation on $W$. Elements of $W$ are called nodes or states, and $R \subseteq W \times W$ is known as the accessibility relation. The third component $V: \Phi \rightarrow \cP(W)$ is called the valuation.

%

\begin{Definition} Let $K=(W,R,V)$ be a Kripke model.
\begin{enumerate}[i)]
\item Any subset $T$ of $W$ is called a \emph{team} of $K$.
\item For any $T \subseteq W$ we denote the image of $T$ as $R[T] = \{ v \in W \mid \exists w \in T : wRv\}$ and the preimage as $R^{-1}[T] = \{ w \in W \mid \exists v \in T : wRv\}$. 
\item For teams $T, S \subseteq W$ we write $T[R]S$ if $S \subseteq R[T]$ and $T \subseteq R^{-1}[S]$. 
\end{enumerate}
Thus, $T[R]S$ holds if and only if for every $v \in S$ there exist a $w \in T$ such that $wRv$ and for every $w \in T$ there exist a $v \in S$ such that $wRv$. 
In the case of a singleton team $\{w\}$ we write $R[w]$ and $R^{-1}[w]$ instead of 
$R[\{w\}]$ and $R^{-1}[\{w\}]$.
\end{Definition}

Instead of defining the satisfaction relation with respect to single nodes of the Kripke model, as in regular Kripke semantics (see e.g. \cite{blackburn}), here it is defined with respect to teams. 
\begin{Definition}\label{lax} The team semantics for \ML is due to \cite{vaananen} and defined as follows:
\begin{align*}
K, T &\models p  &&\iff && T \subseteq V(p) \\
K, T &\models \neg p  &&\iff && T \cap V(p) = \emptyset \\
K, T &\models \varphi \land \psi  &&\iff && K,T\models \varphi \text{ and } K,T \models \psi \\
K, T &\models \varphi \lor \psi  &&\iff  && K,T_1 \models \varphi \text{ and } K,T_2 \models \psi  \\ & && &&\text{for some $T_1, T_2$ such that } T_1 \cup T_2 = T.\\
K,T &\models \Diamond \varphi &&\iff  && K, S\models \varphi \text{ for some }S\text{ such that }T[R]S.\\
K,T &\models \Box \varphi &&\iff && K, S \models \varphi \text{ where } S= R[T].
\end{align*}
\end{Definition}

\begin{remark}
The semantics given in Definition \ref{lax} is so-called \emph{lax semantics} for \ML. There is alternative \emph{strict semantics} with different truth conditions for disjunction and diamond, see~\cite{HKMV15}. However, with strict semantics \ML does not have the crucial property of being closed under unions.
\end{remark}

For singleton teams $T=\{w\}$, $K, T \models \varphi$ is equivalent to $K,w \models \varphi$ in the regular Kripke semantics. Therefore, we can write $K,w \models \varphi$ instead of $K, \{w\} \models \varphi$ without ambiguity. 

A team satisfies a formula of \ML if and only of each node in the team satisfies the formula. This is called the \emph{flatness property}. The flatness property does not hold in the extensions of modal logic which we will study in the following sections. 
\begin{proposition} \label{T->v} (\cite{epmdl}, Proposition 2.5.)
Let $K$ be a Kripke model, $T$ a team of $K$, and $\varphi$ an $\ML(\Phi)$-formula. Then
\[K,T \models \varphi \iff K, w \models \varphi \text{ for every } w \in T.\]
\end{proposition}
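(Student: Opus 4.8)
The plan is to prove the biconditional by induction on the structure of the \ML-formula $\varphi$, establishing both directions at each step so that the equivalence can be threaded through every connective and modality. Throughout I will use two basic facts noted already in the excerpt: for a singleton team $\{w\}$ the team semantics coincides with the pointwise semantics (so the right-hand side can be read purely in terms of singletons), and the empty team satisfies every \ML-formula (which keeps the atomic and collecting steps uniform). The atomic and conjunction cases are routine: $T\subseteq V(p)$ holds exactly when $\{w\}\subseteq V(p)$ for every $w\in T$, and dually for $\neg p$, while $K,T\models\psi\wedge\chi$ reduces immediately to the induction hypotheses for $\psi$ and $\chi$ since a universal quantifier over $w\in T$ distributes over a conjunction.

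The substantive work lies in the disjunction and diamond cases, where the existential flavour of the truth conditions must be reconciled with the universal, pointwise right-hand side; this is where I expect the main obstacle. For $\varphi=\psi\vee\chi$ the forward direction is easy, since any splitting $T=T_1\cup T_2$ witnessing the disjunction forces each $w\in T$ into $T_1$ or $T_2$, whence by induction $K,w\models\psi$ or $K,w\models\chi$. The reverse direction requires manufacturing a single global splitting out of pointwise data: I would set $T_1=\{w\in T : K,w\models\psi\}$ and $T_2=T\setminus T_1$, observe that every $w\in T_2$ must satisfy $\chi$, and apply the induction hypothesis to conclude $K,T_1\models\psi$ and $K,T_2\models\chi$, so that $K,T\models\psi\vee\chi$.

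The diamond case is analogous but more delicate. In the reverse direction, for each $w\in T$ the hypothesis $K,w\models\Diamond\psi$ supplies a witnessing team, and by the induction hypothesis I may select a single successor $v_w\in R[w]$ with $K,v_w\models\psi$; forming the one global team $S=\{v_w : w\in T\}$, I check that $T[R]S$ holds (each $v_w\in R[T]$ and each $w$ has the successor $v_w$ in $S$) and that $K,S\models\psi$ by the induction hypothesis, yielding $K,T\models\Diamond\psi$. The forward direction is easier: a global witness $S$ with $T[R]S$ can be restricted, via $T\subseteq R^{-1}[S]$, to produce for each $w$ a single successor in $S$, giving $K,w\models\Diamond\psi$. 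The key subtlety in both directions is that the lax truth conditions for $\vee$ and $\Diamond$ are exactly what permit assembling or decomposing these global witnesses; flatness would fail under strict semantics.

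Finally, the box case is the most transparent, since $\Box$ has a deterministic truth condition. Here $K,T\models\Box\psi$ means $K,R[T]\models\psi$, and because $R[T]=\bigcup_{w\in T}R[w]$, the induction hypothesis rewrites this as ``$K,v\models\psi$ for all $v\in R[T]$'', which is visibly equivalent to ``$K,R[w]\models\psi$ for every $w\in T$'', i.e.\ to $K,w\models\Box\psi$ for every $w\in T$. Assembling the atomic, conjunction, disjunction, diamond, and box cases completes the induction and hence the proof.
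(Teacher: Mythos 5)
Your proof is correct, and it follows what is essentially the canonical argument for this result: the paper itself does not prove Proposition~\ref{T->v} but merely cites it from \cite{epmdl} (Proposition~2.5), and the standard proof there is precisely this structural induction, with the team split $T_1=\{w\in T : K,w\models\psi\}$ for disjunction, the assembled successor team $S=\{v_w : w\in T\}$ for diamond, and the identity $R[T]=\bigcup_{w\in T}R[w]$ for box. Your handling of the lax semantics in the $\vee$ and $\Diamond$ cases (including the empty-team degeneracies) is exactly where the content lies, and it is done correctly.
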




\subsection{Bisimulation and Hintikka-formulas}
An important concept while dealing with modal logics is bisimulation. Following \cite{blackburn} a bisimulation is a relation between two models in which related states satisfy the same proposition symbols and have matching accessibility possibilities. Here we use the notion of \emph{$k$-bisimulation} between pointed Kripke models $(K, w)$ and $(K,' w')$ where accessibility possibilities for $w$ and $w'$ match up to a certain degree $k$.

\begin{Definition} 
 Let $K$ and $K'$ be Kripke models and let $w$ and $w'$ be states of $K$ and $K'$. The~\emph{$k$-bisimulation relation} between $(K, w)$ and $(K,' w')$, denoted as $ K, w \rightleftarrows_k K', w'$, is defined recursively as follows:
\begin{enumerate}[i)]
\item $ K, w \rightleftarrows_0 K', w'$ if and only if the equivalence $K, w \models p \iff K', w' \models p $ holds for all $p \in \Phi$.
\item $ K, w \rightleftarrows_{k+1} K', w'$ if and only if $ K, w \rightleftarrows_0 K', w'$ and 
\begin{itemize}
\item for every state $v$ of $K$ with $wRv$ there is a state $v'$ of $K'$ with $w'R'v'$ such that $ K, v \rightleftarrows_{k} K', v'$ 
\item for every state $v'$ of  $K'$ with $w'R'v'$ there is a state $v$ of $K$ with $wRv$ such that $ K, v \rightleftarrows_{k} K', v'$.
\end{itemize}
\end{enumerate}
\end{Definition}

We write $K,w \not \leftrightarrows_k K',w'$ if $(K, w)$ and $(K,' w')$ are not $k$-bisimilar. 




\begin{Definition}\label{md}
The modal depth md($\varphi$) of a formula of $\ML(\Phi)$ is defined in in the following way: 
\begin{align*}
\md(p) &=\md(\neg p)=0 \text{ for } p \in \Phi,  \\
\md(\varphi \land \psi)&=\md(\varphi \lor \psi)= \max \{\md (\varphi), \md (\psi)\},\\ 
 \md(\Diamond \varphi) &= \md(\Box \varphi) = \md (\varphi) + 1.
\end{align*}
\end{Definition}

If two pointed $\Phi$-models~$(K, w)$ and $(K,' w')$ agree on all modal formulas of modal depth at most $k$ we call them \emph{$k$-equivalent}.
\begin{Definition}
We say that $(K, w)$ and $(K,' w')$ are \emph{$k$-equivalent}, $ K, w \equiv_k K', w'$, if for every ~$\varphi \in \ML(\Phi)$ with ~$\md(\varphi) \leq k$
\[K, w \models \varphi \iff K', w' \models \varphi .\]
\end{Definition}

We will also make use of the fact that for every pointed $\Phi$-model~$(K, w)$ and every~$k \in \mathbb{N}$ there is a formula that characterises~$(K, w)$ completely up to $k$-equivalence. These \emph{Hintikka-formulas}, which are also called characteristic formulas, are defined as in \cite{otto}.
\begin{Definition}
Assume that $\Phi$ is a finite set of proposition symbols. Let $k \in \mathbb{N}$ and let $(K, w)$ be a pointed $\Phi$-model. The \emph{$k$-th Hintikka-formula} $\chi_{K,w}^k$ of $(K, w)$ is defined recursively as follows:
\begin{itemize}
\item $\chi_{K,w}^0:= \bigwedge \{p \mid p \in \Phi, w \in V(p)\} \land \bigwedge \{ \neg p \mid p \in \Phi, w \notin V(p)\}$
\item $\chi_{K,w}^{k+1}:= \chi_{K,w}^k \land \bigwedge_{v \in R[w]} \Diamond  \chi_{K,v}^k \land \Box \bigvee_{v \in R[w]}  \chi_{K,v}^k.$
\end{itemize}
\end{Definition}

Note that since $\Phi$ is finite, there are only finitely many different Hintikka-formulas $\chi_{K,w}^k$ for each $k\in\mathbb{N}$. 
It is easy to see that $\md  (\chi_{K,w}^k) = k$, and $K, w \models  \chi_{K,w}^k$ for every pointed $\Phi$-model $(K, w)$. Moreover, the Hintikka-formula  $\chi_{K,w}^k$ captures the essence of $k$-bisimulation:

\begin{proposition}\label{hintikka} (\cite{otto}, Theorem 32) 
Let $\Phi$ be a finite set of proposition symbols,~$k \in \mathbb{N}$, and~$(K, w)$ and~$(K,' w')$ pointed~$\Phi$-models. Then the following holds:
\[  K, w \equiv_k K', w' \iff  K, w \rightleftarrows_k K', w' \iff K', w' \models \chi_{K,w}^k. \]
\end{proposition}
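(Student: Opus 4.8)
The plan is to establish the three equivalences by proving the cycle of implications
$$K,w\equiv_k K',w'\;\Rightarrow\;K',w'\models\chi_{K,w}^k\;\Rightarrow\;K,w\rightleftarrows_k K',w'\;\Rightarrow\;K,w\equiv_k K',w',$$
with the inductive steps running on $k$. Since all three statements concern pointed models (equivalently, singleton teams), I would first invoke flatness (Proposition \ref{T->v}) to note that for $\ML$-formulas the team semantics on a singleton $\{w\}$ coincides with ordinary Kripke semantics at $w$; hence throughout the argument I may evaluate formulas pointwise. The first implication is then immediate: it was already observed that $\md(\chi_{K,w}^k)=k$ and $K,w\models\chi_{K,w}^k$, so if $(K,w)$ and $(K',w')$ agree on all formulas of modal depth at most $k$, then in particular $K',w'\models\chi_{K,w}^k$.

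For the second implication I would induct on $k$. In the base case $\chi_{K,w}^0$ is exactly the conjunction of literals fixing the truth value of every $p\in\Phi$ at $w$, so $K',w'\models\chi_{K,w}^0$ forces $w$ and $w'$ to satisfy the same proposition symbols, which is the definition of $\rightleftarrows_0$. For the step, assume $K',w'\models\chi_{K,w}^{k+1}$. Its first conjunct $\chi_{K,w}^k$ gives $K,w\rightleftarrows_k K',w'$ (hence $\rightleftarrows_0$) by the induction hypothesis. For the forth condition, each $v\in R[w]$ contributes a conjunct $\Diamond\chi_{K,v}^k$, so there is $v'$ with $w'R'v'$ and $K',v'\models\chi_{K,v}^k$, whence $K,v\rightleftarrows_k K',v'$. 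For the back condition, the conjunct $\Box\bigvee_{v\in R[w]}\chi_{K,v}^k$ guarantees that every successor $v'$ of $w'$ satisfies $\chi_{K,v}^k$ for some $v\in R[w]$, giving again $K,v\rightleftarrows_k K',v'$ with $wRv$.

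For the third implication I would again induct on $k$. The base case $k=0$ reduces, by an inner structural induction, to the fact that $\rightleftarrows_0$ yields agreement on all proposition symbols and hence on all Boolean combinations, i.e. on all formulas of modal depth $0$. For the step, assuming the claim for $k$ and $K,w\rightleftarrows_{k+1}K',w'$, I would prove agreement on every $\varphi$ with $\md(\varphi)\le k+1$ by induction on the structure of $\varphi$; the literal and the $\land,\lor$ cases are immediate from the inner hypothesis, and the only cases consuming modal depth are $\Diamond\psi$ and $\Box\psi$ with $\md(\psi)\le k$. For $\Diamond\psi$: if $K,w\models\Diamond\psi$ then some $v\in R[w]$ satisfies $\psi$, the forth condition of $(k+1)$-bisimulation yields $v'$ with $w'R'v'$ and $K,v\rightleftarrows_k K',v'$, and the outer induction hypothesis gives $K',v'\models\psi$, so $K',w'\models\Diamond\psi$; the converse uses the back condition, and the $\Box$ cases are dual.

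I expect the main obstacle to be bookkeeping the two nested inductions cleanly: coordinating the induction on $k$ with the structural induction on $\varphi$ in the last implication, and checking in the second implication that the single conjunct $\Box\bigvee_{v\in R[w]}\chi_{K,v}^k$ genuinely supplies the back condition for \emph{all} successors of $w'$ at once. Once the reduction to pointwise semantics is fixed, the remaining steps are routine verifications.
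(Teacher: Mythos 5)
Your proposal is correct, but there is nothing in the paper to compare it against: the paper does not prove this proposition at all, it imports it verbatim from the cited reference (Otto, Theorem~32) as a known tool. Your cycle of implications $\equiv_k \;\Rightarrow\; \models\chi_{K,w}^k \;\Rightarrow\; \rightleftarrows_k \;\Rightarrow\; \equiv_k$ is the standard textbook argument and all three steps go through as you describe; the preliminary reduction to pointwise Kripke semantics via flatness (Proposition~\ref{T->v}) is exactly the right move, since the paper's semantics for $\Diamond$ and $\Box$ on singleton teams collapses to the usual one. Two small points you should make explicit if you write this out in full. First, in the step case of your second implication, when $R[w]=\emptyset$ the disjunction $\bigvee_{v\in R[w]}\chi_{K,v}^k$ is empty and must be read as a contradiction (e.g.\ $p\land\lnot p$, as the paper itself does elsewhere), which is precisely what forces $w'$ to have no successors either; without this convention the back condition for successor-free $w$ is not actually delivered by the formula. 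Second, your parenthetical ``hence $\rightleftarrows_0$'' silently uses downward monotonicity of $k$-bisimulation ($\rightleftarrows_k$ implies $\rightleftarrows_n$ for $n\le k$), which is easy but is a separate induction — the paper only records the team version (Lemma~\ref{n<k}). With those caveats the argument is complete; finiteness of $\Phi$, which you use implicitly, is what makes each $\chi_{K,w}^k$ a genuine (finite) formula, as the paper notes after the definition.
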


The expressive power of basic modal logic can be characterised via bisimulation. A proof of this well-known result of van Benthem can for instance be found in \cite{blackburn}. 

\begin{theorem}
A class~$\mathcal{K}$ of pointed Kripke models~$(K,w)$ is definable by a formula of modal logic if and only if~$\mathcal{K}$ is closed under $k$-bisimulation for some~$k \in \mathbb{N}$.
\end{theorem}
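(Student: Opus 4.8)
The plan is to prove both directions using the Hintikka-formulas $\chi_{K,w}^k$ together with Proposition \ref{hintikka}, which identifies $k$-bisimilarity, $k$-equivalence, and satisfaction of the $k$-th Hintikka-formula.

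For the easy (left-to-right) direction, I would suppose $\mathcal{K}$ is defined by a formula $\varphi\in\ML(\Phi)$ and set $k:=\md(\varphi)$. To see that $\mathcal{K}$ is closed under $k$-bisimulation, take $(K,w)\in\mathcal{K}$ with $K,w\rightleftarrows_k K',w'$. By Proposition \ref{hintikka} this gives $K,w\equiv_k K',w'$, so the two models agree on all formulas of modal depth at most $k$; since $\md(\varphi)=k$, they agree on $\varphi$, whence $K',w'\models\varphi$ and $(K',w')\in\mathcal{K}$. Here I rely only on $\rightleftarrows_k$ being symmetric, which is immediate from its definition.

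For the harder (right-to-left) direction, I would suppose $\mathcal{K}$ is closed under $k$-bisimulation for some fixed $k$. Since $\Phi$ is finite, there are only finitely many distinct Hintikka-formulas of modal depth $k$, so the set $\{\chi_{K,w}^k\mid (K,w)\in\mathcal{K}\}$ is finite and
$$
\varphi:=\bigvee\{\chi_{K,w}^k\mid (K,w)\in\mathcal{K}\}
$$
is a genuine $\ML(\Phi)$-formula. I would then verify that $\varphi$ defines $\mathcal{K}$. If $(K',w')\in\mathcal{K}$, then $\chi_{K',w'}^k$ is one of the disjuncts and $K',w'\models\chi_{K',w'}^k$, so $K',w'\models\varphi$. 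Conversely, if $K',w'\models\varphi$, then $K',w'\models\chi_{K,w}^k$ for some $(K,w)\in\mathcal{K}$; by Proposition \ref{hintikka} this yields $K,w\rightleftarrows_k K',w'$, and closure under $k$-bisimulation forces $(K',w')\in\mathcal{K}$.

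The one point requiring care, and the crux of the argument, is the finiteness of the set of disjuncts in the backward direction: this is exactly what makes $\varphi$ a legitimate (finite) formula of modal logic, and it rests on the finiteness of $\Phi$ noted after the definition of Hintikka-formulas. Without it the disjunction could be infinite and the construction would break down. Everything else reduces to unwinding Proposition \ref{hintikka} and the symmetry of $\rightleftarrows_k$.
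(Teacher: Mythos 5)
Your proof is correct. The paper itself gives no proof of this theorem---it simply cites \cite{blackburn}---but your argument is precisely the standard characteristic-formula proof of the finite-depth van Benthem characterisation, and it is also the template that the paper later lifts to team semantics: your disjunction of Hintikka-formulas is the single-world analogue of Lemma~\ref{k-equiv1}, and your right-to-left verification (a satisfied disjunct yields $k$-bisimilarity via Proposition~\ref{hintikka}, and closure under $k$-bisimulation then pulls $(K',w')$ into $\cK$) is exactly the step repeated at team level in the proof of Theorem~\ref{mincchar}. Two small points are worth making explicit. First, as you correctly flag, the construction needs $\Phi$ to be finite; the theorem as stated omits this hypothesis, but Proposition~\ref{hintikka} and the finiteness of your disjunction both require it, and the right-to-left direction is genuinely false for infinite $\Phi$ (the class of pointed models whose distinguished point satisfies every $p\in\Phi$ is closed under $0$-bisimulation but is not definable by any single formula, since a formula mentions only finitely many proposition symbols). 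Second, if $\cK=\emptyset$ your disjunction is empty and hence not a formula of $\ML(\Phi)$; this degenerate case should be handled by taking $p\land\neg p$, exactly as the paper does for the empty team in Lemma~\ref{k-equiv1}.
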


Next we define \emph{team $k$-bisimulation}, the canonical adaption of $k$-bisimulation to team semantics.
A \emph{$\Phi$-model with a team} is a pair $(K,T)$, where $K$ is a Kripke model over $\Phi$ and $T$ is a team of $K$. We denote by $\cK \! \!\cT (\Phi)$ the class of all $\Phi$-models with teams.

\begin{Definition}(\cite{epmdl}, Definition 3.1)
Let $(K,T)$, $(K', T') \in \cK\! \! \cT (\Phi)$ and $k \in \mathbb{N}$. We say that $(K,T)$ and $(K', T')$ are \emph{team $k$-bisimilar} and denote $ K, T [ \rightleftarrows_k ] K', T'$ if the following \emph{domain} and \emph{range totality} conditions hold: 
\begin{enumerate}
\item[$(\mathrm{D}_k)$] for every $w \in T$ there exists some $w' \in T'$ such that $ K, w \rightleftarrows_k K', w'$ 
\item[$(\mathrm{R}_k)$] for every $w' \in T'$ there exists some $w \in T$ such that $ K, w \rightleftarrows_k K', w'$  
\end{enumerate}
\end{Definition}

We write $K, T [ \not \rightleftarrows_k] K',T'$ if $(K, T)$ and $(K,' T')$ are not $k$-bisimilar. 
We say that a class $\mathcal{K}$ is \emph{closed under team $k$-bisimulation} if  $(K,T) \in \mathcal{K}$ and $K, T [ \rightleftarrows_k] K',T'$ imply that $(K', T') \in \mathcal{K}$.

\begin{Lemma}\label{n<k}(\cite{epmdl}, Lemma 3.2)
Let $(K, T), (K',T') \in \cK\!\!\cT(\Phi)$ and $k \in \mathbb{N}$. If $K, T [\rightleftarrows_k] K',T'$, then \mbox{} \mbox{$K, T [\rightleftarrows_n] K',T'$} for all $n \leq k$. 
\end{Lemma}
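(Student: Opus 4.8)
The plan is to reduce the team-level claim to a pointwise monotonicity property of the ordinary $k$-bisimulation relation, namely that $K, w \rightleftarrows_k K', w'$ implies $K, w \rightleftarrows_n K', w'$ whenever $n \le k$. Once this is in hand, the lemma follows immediately: assume $K, T [\rightleftarrows_k] K', T'$ and fix $n \le k$. For the domain condition $(\mathrm{D}_n)$, take any $w \in T$; by $(\mathrm{D}_k)$ there is some $w' \in T'$ with $K, w \rightleftarrows_k K', w'$, and pointwise monotonicity yields $K, w \rightleftarrows_n K', w'$. The range condition $(\mathrm{R}_n)$ follows symmetrically from $(\mathrm{R}_k)$, so $K, T [\rightleftarrows_n] K', T'$.

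It therefore suffices to establish the pointwise monotonicity. I would first prove the single-step inclusion $\rightleftarrows_{m+1} \subseteq \rightleftarrows_m$ for every $m \in \mathbb{N}$, and then iterate. The single-step inclusion goes by induction on $m$. For the base case $m = 0$, the first conjunct in the definition of $\rightleftarrows_1$ is precisely $\rightleftarrows_0$, so the inclusion is immediate. For the inductive step, suppose $\rightleftarrows_{m+1} \subseteq \rightleftarrows_m$ and assume $K, w \rightleftarrows_{m+2} K', w'$. Then $K, w \rightleftarrows_0 K', w'$, and the back-and-forth conditions hold with $\rightleftarrows_{m+1}$; applying the induction hypothesis to each of the witnessing pairs $(v, v')$ downgrades these conditions to $\rightleftarrows_m$, which together with $\rightleftarrows_0$ gives exactly $K, w \rightleftarrows_{m+1} K', w'$. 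Finally, composing the inclusions $\rightleftarrows_k \subseteq \rightleftarrows_{k-1} \subseteq \cdots \subseteq \rightleftarrows_n$ yields the desired monotonicity for any $n \le k$.

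Alternatively, the pointwise monotonicity can be read off directly from Proposition \ref{hintikka}: since $K, w \rightleftarrows_k K', w'$ is equivalent to the $k$-equivalence $K, w \equiv_k K', w'$, and every formula with $\md(\varphi) \le n$ also satisfies $\md(\varphi) \le k$ when $n \le k$, agreement on all formulas of depth at most $k$ trivially entails agreement on all formulas of depth at most $n$. Hence $\equiv_k$ refines $\equiv_n$, and the inclusion of the corresponding bisimilarity relations follows at once.

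None of the steps presents a genuine obstacle; the only place demanding care is the formulation of the single-step induction, where one must remember that the back-and-forth clauses of $\rightleftarrows_{m+2}$ supply witnesses that are $(m+1)$-bisimilar, so the induction hypothesis has to be applied to those witnesses rather than to $w, w'$ themselves. The route via Proposition \ref{hintikka} avoids even this, trading the explicit induction for the already-established equivalence between $k$-bisimilarity and $k$-equivalence.
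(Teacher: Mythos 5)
Your proof is correct and takes essentially the same approach the paper intends: the paper only cites this lemma from \cite{epmdl} and remarks that it is easy to prove from the definition of $k$-bisimulation, and your reduction to pointwise monotonicity followed by the single-step induction $\rightleftarrows_{m+1}\subseteq\rightleftarrows_m$ is exactly that routine argument written out in full. One small caveat on your alternative route: Proposition \ref{hintikka} assumes $\Phi$ is finite while the lemma does not, so the detour via Hintikka formulas (which needs the implication from $\equiv_n$ back to $\rightleftarrows_n$) only covers the finite-$\Phi$ case, whereas your primary induction works for arbitrary $\Phi$ and should be regarded as the actual proof.
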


This is easy to prove by using the definition of $k$-bisimulation. The following lemma is also a straightforward consequence of the definition of team $k$-bisimulation.

\begin{Lemma}\label{teilmengen}(\cite{epmdl}, Lemma 3.3)
Let  $k \in \mathbb{N}$ and assume that $(K, T), (K',T') \in \cK\!\!\cT(\Phi)$ are such that $K, T [\rightleftarrows_{k+1}] K',T'$. Then
\begin{enumerate}[i)]
\item for every $S$ such that $T[R]S$ there is an $S'$ such that $T'[R']S'$ and $K, S [\rightleftarrows_{k}] K', S';$
\item for every $S'$ such that $T'[R']S'$ there is an $S$ such that $T[R]S$ and $K, S [\rightleftarrows_{k}] K', S';$
\item  $K, S [\rightleftarrows_{k}] K', S'$ for $S= R[T]$ and $S'= R'[T'];$
\item for all $T_1, T_2 \subseteq T$ such that $T= T_1 \cup T_2$ there are $T'_1, T'_2 \subseteq T'$ such that  $T'= T'_1 \cup T'_2$, \\ and~$K, T_i [\rightleftarrows_{k+1}] K', T'_i$ for $i \in \{1,2\}$.
\end{enumerate}
\end{Lemma}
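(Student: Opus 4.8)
The plan is to verify each of the four claims directly from the definitions, using the forth and back clauses of single-state $(k+1)$-bisimulation together with the totality conditions $(\mathrm{D}_{k+1})$ and $(\mathrm{R}_{k+1})$ for the pair $(K,T),(K',T')$. Parts iii) and iv) amount to bookkeeping, so I would dispatch them first and concentrate the real effort on part i); part ii) is then its mirror image, obtained by interchanging the roles of $(K,T)$ and $(K',T')$ and using the back clause in place of the forth clause.

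For part iv) I would define the split of $T'$ by pulling back the split of $T$: set $T'_i = \{ w' \in T' \mid \exists w \in T_i : K, w \rightleftarrows_{k+1} K', w' \}$ for $i \in \{1,2\}$. Then $T' = T'_1 \cup T'_2$ follows from $(\mathrm{R}_{k+1})$, since every $w' \in T'$ has a $(k+1)$-bisimilar partner $w \in T = T_1 \cup T_2$, placing $w'$ in $T'_1$ or $T'_2$; condition $(\mathrm{D}_{k+1})$ for $(T_i, T'_i)$ holds because a partner $w'$ of $w \in T_i \subseteq T$ furnished by $(\mathrm{D}_{k+1})$ for $(T,T')$ automatically lands in $T'_i$, and $(\mathrm{R}_{k+1})$ is immediate from the definition of $T'_i$. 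For part iii), with $S = R[T]$ and $S' = R'[T']$, I would check $(\mathrm{D}_k)$ by taking $v \in R[T]$, choosing $w \in T$ with $wRv$, lifting $w$ to a partner $w' \in T'$ via $(\mathrm{D}_{k+1})$, and applying the forth clause to get $v'$ with $w'R'v'$ and $K, v \rightleftarrows_k K', v'$; since $w' \in T'$, this $v'$ lies in $R'[T'] = S'$. Condition $(\mathrm{R}_k)$ is symmetric, via $(\mathrm{R}_{k+1})$ and the back clause.

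The delicate point, which I would treat most carefully, is part i). Here $S'$ must simultaneously satisfy the bisimulation conditions $(\mathrm{D}_k),(\mathrm{R}_k)$ relative to $S$ and both halves of $T'[R']S'$, namely $S' \subseteq R'[T']$ and $T' \subseteq (R')^{-1}[S']$. A single sweep over $S$ does not suffice, because it need not cover every $w' \in T'$. I would therefore build $S'$ in two stages. First, for each $v \in S$, use $S \subseteq R[T]$ to pick $w \in T$ with $wRv$, lift $w$ to a partner $w' \in T'$, and apply the forth clause of $K, w \rightleftarrows_{k+1} K', w'$ to obtain $v'$ with $w'R'v'$ and $K, v \rightleftarrows_k K', v'$; collecting these witnesses secures $(\mathrm{D}_k)$. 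Second, for each $w' \in T'$, take a partner $w \in T$ via $(\mathrm{R}_{k+1})$, use $T \subseteq R^{-1}[S]$ to find $v \in S$ with $wRv$, and again apply the forth clause to get $v'$ with $w'R'v'$ and $K, v \rightleftarrows_k K', v'$; collecting these secures $T' \subseteq (R')^{-1}[S']$. Letting $S'$ be the union of the two collections, every element of $S'$ comes equipped with an explicit predecessor in $T'$ (giving $S' \subseteq R'[T']$) and an explicit $k$-bisimilar partner in $S$ (giving $(\mathrm{R}_k)$), so all four required conditions hold.

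The main obstacle is exactly this coordination in part i): the naive construction of $S'$ from $S$ alone yields $(\mathrm{D}_k)$ and $S' \subseteq R'[T']$ but can leave some $w' \in T'$ with no successor in $S'$, breaking $T' \subseteq (R')^{-1}[S']$. The two-stage construction resolves this, and I would make sure that each added witness is tagged with both a predecessor in $T'$ and a $k$-bisimilar partner in $S$, so that no condition is accidentally violated when the two collections are merged.
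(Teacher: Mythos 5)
Your proof is correct. The paper itself gives no proof of this lemma (it cites Lemma 3.3 of \cite{epmdl} and remarks that it is a straightforward consequence of the definition of team $k$-bisimulation), and your direct definitional argument is exactly the intended one; in particular, your two-stage construction of $S'$ in part i) --- adding witnesses both for every $v \in S$ (to get $(\mathrm{D}_k)$) and for every $w' \in T'$ (to get $T' \subseteq (R')^{-1}[S']$), each witness carrying a predecessor in $T'$ and a $k$-bisimilar partner in $S$ --- correctly handles the only delicate point.
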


\subsection{Closure properties of logics}

Let $\cL$ be a modal logic  with team semantics which is an extension of \ML. Each formula  $\varphi \in \cL (\Phi)$ defines a class of $\Phi$-models with teams
\[ \| \varphi \|:= \{ (K,T) \in \cK \! \!\cT (\Phi) \mid  K, T \models \varphi\},\] containing all models with teams satisfying $\varphi$.
%
Similarly, given a Kripke model $K=(W,R,V)$ over $\Phi$, each formula $\varphi \in \cL (\Phi)$ defines a set of teams of $K$
\[\| \varphi \|^K:= \{T \subseteq W \mid K,T \models \varphi \}.\]
%
We say that $\cL$ is \emph{downwards closed}, if for all $\varphi \in \cL$ it holds that $T \in \| \varphi \|^K$ and $S \subseteq T$ imply $S \in \| \varphi \|^K $. Thus, if a team $T$ satisfies a formula $\varphi$ every subteam $S$ of $T$ satisfies this formula as well. 
Furthermore, $\cL$ is \emph{closed under unions}, if for all $\varphi \in \cL$ it holds that if $T_i$, $i \in I$, is a collection of teams of $K$ such that $K,T_i \models \varphi \text{ for all } i \in I$, then $K, \bigcup_{i \in I} T_i \models \varphi$. 
Finally, we say that $\cL$ has the \emph{empty team property}, if $\emptyset\in \| \varphi \|^K$ for all $\varphi\in\cL$ and all Kripke models $K$.

It is easy to show that basic modal logic has all these closure properties. 
\begin{Proposition} \label{mlunion}  
\ML has the following closure properties: \\ 
(a) (\cite{vaananen}, Lemma 4.2) $\ML$ is downwards closed.\\
(b) $\ML$ is closed under unions.\\
(c) $\ML$ has the empty team property.
\end{Proposition}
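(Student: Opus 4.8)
The plan is to prove each of the three closure properties (a), (b), and (c) by induction on the structure of the \ML-formula $\varphi$, relying throughout on the flatness property (Proposition \ref{T->v}), which reduces every statement about teams to a statement about the individual nodes they contain. Since \ML has the flatness property, all three closure properties will in fact follow from this single fact almost immediately; part (a) is already cited from \cite{vaananen}, and the main task is to verify (b) and (c).

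For part (c), the empty team property, I would argue directly. Using Proposition \ref{T->v}, $K, \emptyset \models \varphi$ holds iff $K, w \models \varphi$ for every $w \in \emptyset$, and this universally quantified condition is vacuously true. Hence $\emptyset \in \|\varphi\|^K$ for every $\varphi$ and every $K$. One could alternatively give a short structural induction, but the flatness argument is cleanest and avoids case analysis.

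For part (b), closure under unions, let $T_i$, $i \in I$, be teams with $K, T_i \models \varphi$ for all $i$, and set $T = \bigcup_{i \in I} T_i$. By Proposition \ref{T->v} applied to each $i$, every $w \in T_i$ satisfies $K, w \models \varphi$. Now take any $w \in T$; by definition of the union, $w \in T_i$ for some $i \in I$, so $K, w \models \varphi$. Since $w$ was arbitrary, $K, w \models \varphi$ for every $w \in T$, and applying Proposition \ref{T->v} in the other direction gives $K, T \models \varphi$. This establishes closure under unions.

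I do not anticipate a genuine obstacle here: the whole statement is a direct corollary of flatness, and the only thing to be careful about is invoking Proposition \ref{T->v} in both directions (from team satisfaction to nodewise satisfaction, and back). The one point worth flagging is that these easy arguments work precisely because we are using the lax semantics of Definition \ref{lax}; as the remark after that definition notes, the strict semantics would break closure under unions, so the proof genuinely depends on the lax truth conditions for disjunction and diamond being what make flatness hold.
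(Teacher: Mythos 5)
Your proof is correct, but it takes a genuinely different route from the paper. The paper proves (b) and (c) by structural induction on $\varphi$ (the full case analysis for literals, $\land$, $\lor$, $\Diamond$, $\Box$ appears in the Appendix), whereas you derive everything from the flatness property, Proposition \ref{T->v}: the empty team property is vacuous quantification over $\emptyset$, and closure under unions reduces to the observation that pointwise satisfaction is preserved under unions of sets. Your argument is shorter and, as a proof of this proposition in isolation, cleaner; there is no gap in it, and your remark that lax semantics is what underwrites flatness is accurate. What the paper's inductive proof buys, however, is reusability: the later closure results for \MINC (Propositions \ref{mincunion} and \ref{emptyteam}) and for $\ML(\triangledown)$ (Propositions \ref{nedisunion} and \ref{nedisempty}) are proved by taking the inductive cases for literals, connectives, diamond and box verbatim from the \ML proof and adding only the one new case (the inclusion atom, respectively the operator $\triangledown$). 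Since flatness fails in those extensions --- as the paper notes explicitly --- your flatness shortcut cannot be extended to them, and had the base result been proved your way, the full induction would have to be written out from scratch later anyway. So the two approaches trade elegance here against infrastructure for what follows.
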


The proof of (b) is similar to the proof for (first-order) inclusion logic by Galliani in \cite{galliani}. The proof of (c) is by straightforward induction on formulas of $\ML$. 


\section{Modal logic with inclusion atoms}

In this section we introduce modal logic with inclusion atoms, \MINC, and give a characterisation for its expressive power in terms of $k$-bisimulation, closure under unions and empty team property.


\begin{Definition}
Let $\Phi$ be as set of proposition symbols. The syntax of $\MINC(\Phi)$ is obtained by the following grammar
\[ \varphi := p \mid \neg p \mid (\varphi \land \varphi ) \mid ( \varphi \lor \varphi ) \mid \Diamond \varphi \mid \Box \varphi \mid \varphi_1\ldots\varphi_n\subseteq\psi_1\ldots\psi_n, \] where $p\in\Phi$ and $\varphi_i, \psi_i \in \ML$ for $1 \leq i \leq n$. 
\end{Definition}

Note that by this definition, no nesting of inclusion atoms is allowed.

\begin{Definition} The semantics for \MINC is given by the semantics for $\ML$ and the 
additional clause:
\begin{align*}
K, T \models  \varphi_1\ldots\varphi_k\subseteq\psi_1\ldots\psi_k  \;\iff\; 
 \forall w \in T  \ \exists v \in T: \bigwedge_{i=1}^n (K, w \models \varphi_i \iff K, v \models  \psi_i).
\end{align*}

\end{Definition}

We follow the convention of \cite{epmdl} by calling formulas of the form $\varphi_1\ldots\varphi_n\subseteq\psi_1\ldots\psi_n$ inclusion \emph{atoms}, even though they contain formulas instead of just proposition symbols. If only proposition symbols in inclusion atoms were allowed, the expressive power of the logic would be restricted. Indeed, purely propositional inclusion atoms are not sufficient as the following example shows.
%

\begin{example}
Let $\Phi=\{p\}$ and let $K$ be a Kripke model where $W=\{w, v\}$, $R= \emptyset$ and $V(p)= \{v\}$.
The inclusion atom $\varphi=p\subseteq \neg p$ is not definable in terms of purely propositional inclusion atoms. 
As $\Phi$ consists only of one proposition symbol all propositional inclusion atoms are of the form $p\ldots p\subseteq p\ldots p$, and they are trivially true in all teams of $K$. Thus, they can be replaced by $p\lor\lnot p$. On the other hand consider the team $T= \{w, v\}$. It holds that $K,w \nmodels p$ and $K, v \models p$. Thus, this team satisfies $\varphi$, but none of its singleton subsets $\{w\}, \{v\}$ satisfies $\varphi$. So $\| \varphi\|^K$ is not downwards closed and therefore $\varphi$ is not definable by any formula of modal logic with purely propositional inclusion atoms.
\end{example}

We will need the definition of modal depth in modal logic with inclusion atoms.

\begin{Definition}
The modal depth md($\theta$) of a formula of $\MINC$ is defined by adding the case 
\[\md(\varphi_1\ldots \varphi_n \subseteq \psi_1\ldots\psi_n)=\max\{\md(\varphi_1),\ldots, \md(\varphi_n), \md(\psi_1),\ldots, \md(\psi_n) \} \] to Definition \ref{md}.
\end{Definition}

We prove now that each formula of modal inclusion logic is closed under $k$-bisimulation for some $k$. 

\begin{Proposition}\label{k-bisim-closure}
Let $\Phi$ be a set of proposition symbols and let $\cK \subseteq \cK\! \!\cT (\Phi)$. If $\cK$ is definable in \MINC, then there exists a $k \in \mathbb{N}$ such that $\cK$ is closed under $k$-bisimulation.
\end{Proposition}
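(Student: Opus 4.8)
The plan is to reduce the statement to a claim about a single \MINC-formula $\varphi$ that defines $\cK$ (so $\cK = \|\varphi\|$), and then to prove by induction on the structure of $\varphi$ that the class $\|\varphi\|$ is closed under team $k$-bisimulation, where $k := \md(\varphi)$. More precisely, I would prove the stronger invariant: for every \MINC-formula $\varphi$ and every pair of models with teams $(K,T),(K',T')$, if $K,T[\rightleftarrows_{\md(\varphi)}]K',T'$ and $K,T\models\varphi$, then $K',T'\models\varphi$. Since team $k$-bisimilarity is visibly symmetric, this invariant immediately yields that $\|\varphi\|$ is closed under $\md(\varphi)$-bisimulation, which is exactly the conclusion with $k=\md(\varphi)$.

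The induction would proceed case by case on the grammar of \MINC. For the literals $p$ and $\neg p$ (modal depth $0$), closure under $\rightleftarrows_0$ follows directly from condition i) of the $k$-bisimulation definition together with the team-semantic clauses $T\subseteq V(p)$ and $T\cap V(p)=\emptyset$: every $w'\in T'$ has a $0$-bisimilar partner $w\in T$ agreeing on all proposition symbols, and vice versa. For $\varphi\land\psi$ the step is immediate since $\md(\varphi\land\psi)=\max\{\md(\varphi),\md(\psi)\}$ and Lemma~\ref{n<k} lets me restrict the bisimilarity level down to each conjunct's own modal depth. The disjunction case uses Lemma~\ref{teilmengen}(iv) to transport a splitting $T=T_1\cup T_2$ across the bisimulation to a splitting $T'=T_1'\cup T_2'$ with $K,T_i[\rightleftarrows_{\md(\varphi\lor\psi)}]K',T_i'$, and then applies the induction hypothesis to each disjunct (again invoking Lemma~\ref{n<k} to drop to the correct depths). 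The modal cases $\Diamond\varphi$ and $\Box\varphi$ decrease modal depth by one, and here Lemma~\ref{teilmengen}(i)--(iii) is exactly tailored to transport the successor teams: for $\Box$, parts (iii) handles $S=R[T]$ versus $S'=R'[T']$ directly; for $\Diamond$, part (i) produces a matching successor team $S'$ with $T'[R']S'$ and $K,S[\rightleftarrows_k]K',S'$, after which the induction hypothesis finishes.

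The genuinely new case, and the one I expect to be the main obstacle, is the inclusion atom $\theta=\varphi_1\ldots\varphi_n\subseteq\psi_1\ldots\psi_n$. Its modal depth is the maximum over the modal depths of the component \ML-formulas $\varphi_i,\psi_i$, and each of these is an ordinary \ML-formula whose truth at a state is a $\md$-invariant under pointwise $k$-bisimulation via Proposition~\ref{hintikka}. The key observation I would exploit is that whether $\theta$ holds in a team depends only on which combinations of truth values of $(\varphi_1,\ldots,\varphi_n,\psi_1,\ldots,\psi_n)$ occur at the states of the team. Suppose $K,T\models\theta$ and take any $w'\in T'$; by condition $(\mathrm{R}_k)$ there is a $k$-bisimilar $w\in T$, so $w$ and $w'$ agree on every $\varphi_i$ and $\psi_i$ (these having modal depth $\le k$). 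Applying the truth of $\theta$ at $w$ yields some $v\in T$ realizing the matching pattern $K,v\models\psi_i\iff K,w\models\varphi_i$; then condition $(\mathrm{D}_k)$ supplies a $k$-bisimilar $v'\in T'$, which agrees with $v$ on all the $\psi_i$ and hence witnesses the requirement for $w'$. This shows $K',T'\models\theta$. The care required is in correctly chaining the two totality conditions $(\mathrm{D}_k)$ and $(\mathrm{R}_k)$ in the right order and verifying that the component formulas indeed have modal depth at most $k=\md(\theta)$ so that pointwise $k$-bisimulation preserves their truth values; once that bookkeeping is in place the argument closes and the induction is complete.
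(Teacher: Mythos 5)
Your proposal is correct and follows essentially the same route as the paper: induction on $\varphi$ with $k=\md(\varphi)$, using Lemma~\ref{n<k} for conjunction, Lemma~\ref{teilmengen} for disjunction, $\Diamond$ and $\Box$, and, for the inclusion atom, exactly the paper's chaining of $(\mathrm{R}_k)$, the satisfaction of the atom in $T$, and $(\mathrm{D}_k)$ via Proposition~\ref{hintikka}. No substantive differences to report.
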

\begin{proof}
Assume that $ \varphi \in \MINC$. We prove by induction on $\varphi$ that the class $\| \varphi \|$ is closed under $k$-bisimulation, where $k= \md(\varphi)$.
\begin{enumerate}[-]
\item Let $\varphi=p \in \Phi$ and assume that $K, T \models \varphi$ and $K, T [\rightleftarrows_k]K', T'$ for $k=0$. Then $K,w \models p$ for all $w \in T$, and for each $w' \in T'$ there is $w \in T$ such that $K, w \rightleftarrows_0 K', w'$. Thus, for all $w' \in T'$, $K', w' \models p$, whence $K', T' \models \varphi$.
\item The case $\varphi= \neg p$ is similar to the previous one.
\item Let $\varphi= \psi \lor \theta$, and assume $K, T \models \varphi$ and $K, T [\rightleftarrows_k]K', T'$, where $k = \md (\varphi)= \max\{\md(\psi), \md (\theta)\}$. Then there are $T_1, T_2 \subseteq T$ such that $T = T_1 \cup T_2$, $K, T_1 \models \psi$ and $K, T_2 \models \theta$. By Lemma~\ref{teilmengen} (iv) there are subteams $T'_1, T'_2 \subseteq T'$ such that $T' = T'_1 \cup T'_2$ and $K, T_i [\rightleftarrows_k]K', T'_i$ for $i \in \{1,2\}$, whence $K, T_1 [\rightleftarrows_m]K', T'_1$ and $K, T_2 [\rightleftarrows_n] K', T'_2$, where $m = \md (\psi)$ and $n = \md(\theta)$. By induction hypothesis, $K', T'_1 \models \psi$ and $K,T'_2 \models \theta$. Thus, $K', T' \models \varphi$.
\item Let $\varphi= \psi \land \theta$, and assume $K, T \models \varphi$ and $K, T [\rightleftarrows_k]K', T'$, where $k = \md (\varphi)= \max\{\md(\psi), \md (\theta)\}$. Then $K, T \models \psi$ and $K, T \models \theta$. By Lemma~\ref{n<k} it holds that $K, T [\rightleftarrows_m]K', T'$ and $K, T [\rightleftarrows_n] K', T'$, where $m = \md (\psi)$ and $n = \md(\theta)$. Thus, by induction hypothesis $K', T' \models \psi$ and $K', T' \models \theta$. Hence, $K', T' \models \varphi.$
\item Let  $\varphi= \varphi_1\ldots \varphi_n \subseteq \psi_1\ldots\psi_n$ and assume that $K,T \models \varphi$ which means that for each $ w \in T$  there exist $v \in T$ such that $\bigwedge_{i=1}^n (K, w \models \varphi_i \iff K,v \models  \psi_i)$. Let $K,T [\rightleftarrows_k] K',T'$ where $k=\md(\varphi)$. Then for every $w' \in T'$ there exists $w  \in T$ such that $K, w \leftrightarrows_k K',w'$ and for every $w  \in T$ there exists $w' \in T'$  such that $K, w \leftrightarrows_k K',w'$. By Proposition \ref{hintikka} it follows that for every $w' \in T'$ there exists $w  \in T$ such that $K', w' \models \varphi_i \iff K, w \models \varphi_i$ for each $1 \leq i \leq n$, since $\md(\varphi_i) \leq k$. By assumption for each of these $w \in T$ there exists a $v \in T$ such that $K, w \models \varphi_i \iff K,v \models  \psi_i$.  Again by the definition of $k$-bisimulation for each $v \in T$ there exists a $v' \in T'$ such that $K', v' \models \psi_i \iff K, v \models \psi_i$ for each of the given $\psi_i$. Therefore, for each $w' \in T'$ we find a $v' \in T'$ such that $K', w' \models \varphi_i \iff K,v' \models  \psi_i$ for all $i \in I$ and thus $K', T' \models \varphi$. 

\item Let $\varphi = \Diamond \psi$, and assume that $K, T \models \varphi$ and $K, T [\rightleftarrows_k]K', T'$, where $k = \md (\varphi)= \md(\psi)+1$. Then there is a team $S$ of $K$ such that $T[R]S$ and $K, S \models \psi$. By Lemma~\ref{teilmengen} (i), there is a team $S'$ such that $T'[R']S'$ and $K, S [\rightleftarrows_{k-1}] K', S'$. By induction hypothesis $K',S' \models \psi$, and consequently $K', T' \models \varphi$.
\item Let $\varphi = \Box \psi$, and assume that $K, T \models \varphi$ and $K, T [\rightleftarrows_k]K', T'$, where $k = \md (\varphi)= \md(\psi)+1$. Then $K, R[T] \models \psi$ and by Lemma~\ref{teilmengen} (iii), $K, R[T] [\rightleftarrows_{k-1}] K', R'[T']$. Thus, by induction  \mbox{ hypothesis } \mbox{$K', R[T'] \models \psi$} and consequently $K', T' \models \varphi$. \qedhere

\end{enumerate}

\end{proof}

%
%
%

By Proposition~\ref{mlunion}(b), modal logic is closed under unions. This is proved 
by structural induction on $\varphi\in\MINC$. We will add the case of the inclusion atoms to this proof. 

\begin{Proposition} \label{mincunion}
\MINC is closed under unions.
\end{Proposition}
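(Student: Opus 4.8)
The plan is to prove this by structural induction on $\varphi\in\MINC$. Since Proposition~\ref{mlunion}(b) already establishes closure under unions for all $\ML$-connectives, the bulk of the induction — the atoms $p$, $\neg p$, and the cases $\psi\land\theta$, $\psi\lor\theta$, $\Diamond\psi$, $\Box\psi$ — is handled exactly as in the proof for $\ML$. So the only genuinely new work is the inclusion-atom case, which I would add to the existing proof. Concretely, I fix a Kripke model $K$ and a family $(T_j)_{j\in J}$ of teams with $K,T_j\models\varphi$ for all $j$, and I must show $K,\bigcup_{j\in J}T_j\models\varphi$.

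First I would set up the inclusion-atom case. Let $\varphi=\varphi_1\ldots\varphi_n\subseteq\psi_1\ldots\psi_n$ and write $T=\bigcup_{j\in J}T_j$. By the semantics of the inclusion atom, the hypothesis $K,T_j\models\varphi$ means that for every $w\in T_j$ there is some $v\in T_j$ with $\bigwedge_{i=1}^n(K,w\models\varphi_i\iff K,v\models\psi_i)$. To verify $K,T\models\varphi$, I take an arbitrary $w\in T$ and must produce a witness $v\in T$ realizing the same truth-value pattern on the $\psi_i$'s as $w$ has on the $\varphi_i$'s.

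The key observation — and the reason unions work for inclusion atoms — is that the inclusion atom is witnessed \emph{locally within a single team}: since $w\in T=\bigcup_{j}T_j$, there is some index $j_0\in J$ with $w\in T_{j_0}$, and then the hypothesis $K,T_{j_0}\models\varphi$ hands me a witness $v\in T_{j_0}$ with $\bigwedge_{i=1}^n(K,w\models\varphi_i\iff K,v\models\psi_i)$. Crucially, since the truth values $K,w\models\varphi_i$ and $K,v\models\psi_i$ are evaluated pointwise (the $\varphi_i,\psi_i$ are ordinary $\ML$-formulas, so by Proposition~\ref{T->v} their truth depends only on the single state), moving from $T_{j_0}$ to the larger team $T$ does not disturb this witnessing relation. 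Finally $v\in T_{j_0}\subseteq T$, so $v$ is a legitimate witness in $T$. Since $w\in T$ was arbitrary, this establishes $K,T\models\varphi$.

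I do not expect a serious obstacle here: the whole point is that the inclusion atom quantifies existentially over witnesses \emph{inside the same team}, so enlarging the team only adds more states without removing any witness — this is precisely the structural feature that makes the first-order inclusion atom (and hence its modal analogue) union-closed, in contrast to dependence atoms. The one point to state carefully is that the pointwise evaluation of $\varphi_i$ and $\psi_i$ is independent of which team the state sits in, which is exactly the flatness of $\ML$ (Proposition~\ref{T->v}); the rest of the induction is inherited verbatim from Proposition~\ref{mlunion}(b).
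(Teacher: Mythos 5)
Your proof is correct and follows essentially the same route as the paper's: induction on $\varphi$ with the $\ML$ cases inherited from Proposition~\ref{mlunion}(b), and for the inclusion atom observing that any $w$ in the union lies in some $T_{j_0}$, whose internal witness $v\in T_{j_0}$ remains a valid witness in the union since $\ML$-formulas are evaluated pointwise. Your write-up merely makes explicit (via flatness and the choice of $j_0$) what the paper's terser argument leaves implicit, and it avoids the paper's minor notational clash of using $i$ both for teams and for the formulas in the atom.
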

\begin{proof} The proof is by induction on $\varphi\in\MINC$. The steps for literals, connectives, diamond and box are 
as in the case of $\ML$.
Let $\varphi:= \varphi_1\ldots \varphi_n \subseteq \psi_1\ldots \psi_n$. Assume $K, T_i \models \varphi$ for all $ i \in I$. Then for all $w \in T_i$ there exists $ v \in T_i $ such that $ \bigwedge_{i=1}^n(K, w \models \varphi_i \iff K, v \models \psi_i)$. Therefore for all $w \in \bigcup_{i \in I} T_i$ there exists $v \in \bigcup_{i \in I} T_i $ such that $ \bigwedge_{i=1}^n(K, w \models \varphi_i \iff K, v \models \psi_i)$. Thus, $K, \bigcup_{i \in I} T_i \models \varphi$.
\end{proof}

It is also straightforward to prove by induction that $K,\emptyset\models\varphi$ for every formula $\varphi$ of modal inclusion logic: 

\begin{Proposition}\label{emptyteam}
\MINC has the empty team property. 
\end{Proposition}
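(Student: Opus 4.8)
The plan is to prove the empty team property by straightforward structural induction on the formula $\varphi \in \MINC$, showing that $K, \emptyset \models \varphi$ holds for every Kripke model $K$. Since the semantics of $\MINC$ extends that of $\ML$, the base cases and most inductive steps reduce to checking that each semantic clause is vacuously satisfied when the team is empty. The key observation throughout is that any clause quantifying universally over team members, or asserting the existence of a suitable subdivision of the team, becomes trivial for $T = \emptyset$.

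First I would handle the literals: for $\varphi = p$ we have $\emptyset \subseteq V(p)$ trivially, and for $\varphi = \neg p$ we have $\emptyset \cap V(p) = \emptyset$, so both hold. For conjunction $\varphi = \psi \land \theta$, the induction hypothesis gives $K, \emptyset \models \psi$ and $K, \emptyset \models \theta$, hence $K, \emptyset \models \varphi$. For disjunction $\varphi = \psi \lor \theta$, I would split $\emptyset = \emptyset \cup \emptyset$ and apply the induction hypothesis to both halves. For the modal cases, note that $\emptyset[R]\emptyset$ holds (both totality conditions are vacuous), so for $\varphi = \Diamond \psi$ the team $S = \emptyset$ witnesses the existential and the induction hypothesis yields $K, \emptyset \models \psi$; similarly for $\varphi = \Box \psi$ we have $R[\emptyset] = \emptyset$, and the induction hypothesis gives $K, \emptyset \models \psi$.

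The only genuinely new case is the inclusion atom $\varphi = \varphi_1 \ldots \varphi_n \subseteq \psi_1 \ldots \psi_n$, and this is where the argument is cleanest rather than hardest: the defining clause requires that for every $w \in T$ there exists $v \in T$ satisfying the stated equivalences, and when $T = \emptyset$ the universal quantifier ranges over the empty set, so the condition holds vacuously. Thus $K, \emptyset \models \varphi$ with no appeal to the induction hypothesis.

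I do not anticipate a real obstacle here, since every clause is satisfied either vacuously (literals, inclusion atoms, the universally quantified parts) or by feeding the empty team through the induction hypothesis (connectives, modalities). The main point to take care with is simply to verify that the team relation $T[R]S$ and the image $R[T]$ behave correctly on the empty team, namely that $\emptyset[R]\emptyset$ and $R[\emptyset] = \emptyset$, which follow immediately from the definitions.
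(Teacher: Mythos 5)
Your proof is correct and follows the same overall strategy as the paper: structural induction on $\varphi$, with the $\ML$ cases handled exactly as in Proposition~\ref{mlunion}(c) and the inclusion atom added as the single new case. The difference lies in how that new case is justified. You observe that the truth condition of $\varphi_1\ldots\varphi_n \subseteq \psi_1\ldots\psi_n$ quantifies universally over the members of the team, so for $T=\emptyset$ it holds vacuously, with no appeal to the induction hypothesis. The paper instead argues that by induction hypothesis $K,\emptyset \models \varphi_i$ and $K,\emptyset \models \psi_i$ for all $i$, and concludes $K,\emptyset\models\varphi$ from this. Your justification is the accurate one: the semantics of the inclusion atom refers to satisfaction of the $\varphi_i$ and $\psi_i$ at individual nodes $w,v \in T$, not to their satisfaction by teams, so empty-team satisfaction of the constituents is beside the point (indeed, since no nesting of inclusion atoms is allowed, the constituents are plain $\ML$ formulas and no team-level induction on them is available or needed). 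In this respect your argument is cleaner than the paper's and repairs a small non sequitur in its inclusion-atom step; everything else, including the verification that $\emptyset[R]\emptyset$ and $R[\emptyset]=\emptyset$ for the modal cases, matches the paper's appendix proof for $\ML$.
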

\begin{proof}
By Proposition~\ref{mlunion} (c) \ML has the empty team property. The proof is done by structural induction on $\varphi$. We add the case of $\varphi$ being an inclusion atom. 
 
Let $\varphi= \varphi_1\ldots \varphi_n \subseteq \psi_1\ldots \psi_n $. Then $K,T \models \varphi$ if and only if  for all $w \in T$ there exists $v \in T$ such that $\bigwedge_{i=1}^n (K, w \models \varphi_i \iff K, v \models  \psi_i)$. By induction hypothesis $K, \emptyset \models \varphi_i$ and $K, \emptyset \models \psi_i$ for $1 \leq i \leq n$. Thus, $K, \emptyset \models\varphi $. 
\end{proof} 

By Propositions \ref{k-bisim-closure}, \ref{mincunion} and \ref{emptyteam}, every $\MINC$-definable class is closed under team $k$-bisimulation for some $k$, closed under  unions, and has the empty team property. Our aim is to prove that the converse is also true: if a class has these three closure properties, then it is definable in \MINC. We start by showing that the range totality condition $(\mathrm{R}_k)$ in the definition of team $k$-bisimilarity can be described by a formula of \ML.

\begin{Lemma}\label{k-equiv1}
Let $\Phi$ be a finite set of proposition symbols and $k \in \mathbb{N}$. For any pair $(K,T) \in \cK\! \!\cT (\Phi)$ there exists a formula $\eta^k_{K,T} \in \ML (\Phi)$ such that 
\[K',T' \models \eta^k_{K,T } \iff\forall w'\in T'\exists w\in T: K,w\rightleftarrows_k K',w'.\]
\end{Lemma}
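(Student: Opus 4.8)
The plan is to construct $\eta^k_{K,T}$ explicitly as a disjunction of Hintikka-formulas and then reduce the team-semantic statement to a world-by-world one by flatness. The starting observation is Proposition~\ref{hintikka}: for any world $w'$ of $K'$ and any $w \in T$ we have $K, w \rightleftarrows_k K', w'$ if and only if $K', w' \models \chi_{K,w}^k$. Hence the condition $\exists w \in T : K, w \rightleftarrows_k K', w'$ is equivalent to saying that the single world $w'$ satisfies $\chi_{K,w}^k$ for some $w \in T$, i.e.\ that $K', w'$ satisfies the disjunction $\bigvee_{w \in T} \chi_{K,w}^k$ in ordinary Kripke semantics.

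This motivates the definition
\[
\eta^k_{K,T} := \bigvee_{w \in T} \chi_{K,w}^k .
\]
A priori this looks like a possibly infinite disjunction, but since $\Phi$ is finite there are only finitely many pairwise distinct $k$-Hintikka-formulas, as remarked after their definition; thus $\{\chi_{K,w}^k \mid w \in T\}$ is finite and $\eta^k_{K,T}$ can be taken to be a genuine $\ML(\Phi)$-formula. For the degenerate case $T = \emptyset$ the empty disjunction is read as a contradiction, so I would set $\eta^k_{K,\emptyset} := p \land \neg p$ for some $p \in \Phi$. This formula is satisfied by $(K',T')$ precisely when $T' = \emptyset$, which matches the right-hand side, since for $T = \emptyset$ the universally quantified condition holds only vacuously, i.e.\ only when $T'$ is itself empty.

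The key step is then to pass from single worlds to teams using flatness. As $\eta^k_{K,T}$ contains no inclusion atoms, it is a plain $\ML(\Phi)$-formula, so Proposition~\ref{T->v} gives $K', T' \models \eta^k_{K,T}$ if and only if $K', w' \models \eta^k_{K,T}$ for every $w' \in T'$, where satisfaction at a single world coincides with ordinary Kripke semantics (the disjunction behaving classically there because the only proper subteam of a singleton is $\emptyset$, on which every formula holds by the empty team property). In ordinary Kripke semantics $K', w' \models \bigvee_{w \in T} \chi_{K,w}^k$ holds exactly when $K', w' \models \chi_{K,w}^k$ for some $w \in T$, which by Proposition~\ref{hintikka} means $\exists w \in T : K, w \rightleftarrows_k K', w'$. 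Combining these equivalences yields $K', T' \models \eta^k_{K,T} \iff \forall w' \in T'\, \exists w \in T : K, w \rightleftarrows_k K', w'$, as required.

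I expect the argument to present bookkeeping rather than a genuine obstacle: one must justify finiteness of the disjunction from the finiteness of $\Phi$ and dispose of the empty-team corner case cleanly. The one conceptual point worth flagging is that the disjunction in $\eta^k_{K,T}$ is interpreted with team semantics, yet still behaves pointwise; this is exactly what flatness delivers, and it is the reason the construction works without having to reason about arbitrary splittings of $T'$.
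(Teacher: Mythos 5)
Your proof is correct and takes essentially the same route as the paper's: the same formula $\bigvee_{w\in T}\chi_{K,w}^k$ (with the same $p\land\lnot p$ device for $T=\emptyset$, and the same finiteness argument via finitely many Hintikka-formulas), resting on the same two ingredients, Proposition~\ref{hintikka} and flatness (Proposition~\ref{T->v}). The only difference is organisational: the paper splits $T'$ explicitly into subteams $T'_w=\{w'\in T'\mid K',w'\models\chi_{K,w}^k\}$ and applies flatness disjunct-by-disjunct, whereas you apply flatness to the whole disjunction at once and then reason pointwise in ordinary Kripke semantics — a slightly tidier packaging of the same argument.
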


\begin{proof}
If $T=\emptyset$, then the right-hand side of the equivalence holds if and only if $T'=\emptyset$. Thus, in this case we can simply set $\eta^k_{K,T } = p\land \lnot p$, where $p\in \Phi$.
Else we let $\eta^k_{K,T}$ to be the formula \[ \bigvee_{w \in T } \chi_{K,w}^k,\] where $\chi_{K,w}^k$ is the $k$-th Hintikka-formula of the pair $(K, w)$. Note that since $\Phi$ is finite, there are only finitely many different Hintikka-formulas $\chi_{K,w}^k$. Thus, the disjunction $\bigvee_{w \in T}$ is essentially finite, whence $\eta^k_{K,T} \in \ML$.

Assume that for every $w' \in T'$ there exists  $w \in T$ such that $ K, w \rightleftarrows_k K', w'$. Thus, by Proposition \ref{hintikka}, for every $w' \in T'$ there exists $w \in T$ such that $K',w' \models \chi_{K,w}^k$. Let $T'_w=\{ w' \in T' \mid K',w' \models \chi_{K,w}^k \}$ for each $w\in T$. Then $T' = \bigcup_{w\in T} T'_w$, and by  Proposition~\ref{T->v}, $K', T'_w \models \chi_{K,w}^k $. It follows that $K',T' \models \bigvee_{w \in T } \chi_{K,w}^k $. 

For the other direction assume that $K',T' \models \eta^k_{K,T}$. Thus, there exist $T'_w \subseteq T'$, $w\in T$, such that $T'= \bigcup_{w \in T} T'_w$ and $K', T'_w \models \chi_{K,w}^k$. It follows now from Proposition~\ref{T->v} and Proposition \ref{hintikka} that for all $w' \in T'$ there exists  $w \in T$ such that $K,w \rightleftarrows_k K',w'$. 
\end{proof}

Next we show that, by using inclusion atoms, Lemma \ref{k-equiv1} can be extended to cover also the domain totality condition $(\mathrm{D}_k)$ of team $k$-bisimilarity (except for the case $T'=\emptyset$). 

\begin{Lemma}\label{k-equiv}
Let $\Phi$ be a finite set of propositon symbols and $k \in \mathbb{N}$. For any pair $(K,T) \in \cK\! \!\cT (\Phi)$ there exists a formula $\psi^k_{K,T} \in \MINC (\Phi)$ such that $K',T' \models \psi^k_{K,T }$ if and only if $K, T [\rightleftarrows_k] K',T'$ or $T'=\emptyset$.
\end{Lemma}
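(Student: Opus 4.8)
The plan is to build on Lemma~\ref{k-equiv1}, which already supplies a formula $\eta^k_{K,T}\in\ML(\Phi)$ capturing exactly the range totality condition $(\mathrm{R}_k)$. What remains is to express the domain totality condition $(\mathrm{D}_k)$ by means of inclusion atoms (modulo the empty-team case) and to conjoin the two. First I would recast $(\mathrm{D}_k)$ in terms of Hintikka-formulas: by Proposition~\ref{hintikka}, $K,w\rightleftarrows_k K',w'$ holds iff $K',w'\models\chi^k_{K,w}$, so the requirement that every $w\in T$ has a $k$-bisimilar partner in $T'$ is equivalent to saying that \emph{every $k$-Hintikka type realized by some element of $T$ is also realized by some element of $T'$}. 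Since $\Phi$ is finite, only finitely many distinct formulas $\chi^k_{K,w}$, $w\in T$, occur, so any conjunction indexed over them is essentially finite.

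The key observation I would isolate is that for a single type $\chi$, the inclusion atom $\top\subseteq\chi$ (with $\top:=p\vee\neg p$ for a fixed $p\in\Phi$) asserts precisely ``$T'=\emptyset$ or $\chi$ is realized in $T'$''. Indeed, unwinding the semantics of the atom gives $K',T'\models\top\subseteq\chi$ iff for every $w'\in T'$ there is $v'\in T'$ with $K',v'\models\chi$; this is vacuously true when $T'=\emptyset$ and reduces to $\exists v'\in T':K',v'\models\chi$ otherwise. I would therefore define
$$\psi^k_{K,T}:=\eta^k_{K,T}\wedge\bigwedge_{w\in T}\bigl(\top\subseteq\chi^k_{K,w}\bigr),$$
which lies in $\MINC(\Phi)$ because $\eta^k_{K,T},\top,\chi^k_{K,w}\in\ML$ and, as noted, the conjunction is essentially finite.

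It then suffices to verify the stated equivalence by distinguishing on $T'$. If $T'=\emptyset$, every $\MINC$-formula holds on the empty team by Proposition~\ref{emptyteam}, so $\psi^k_{K,T}$ holds, matching the right disjunct ``$T'=\emptyset$''. If $T'\neq\emptyset$, the conjunction $\bigwedge_{w\in T}(\top\subseteq\chi^k_{K,w})$ reduces exactly to the condition that every type of $T$ is realized in $T'$, i.e.\ to $(\mathrm{D}_k)$, while $\eta^k_{K,T}$ contributes $(\mathrm{R}_k)$; hence $\psi^k_{K,T}$ holds iff both hold, i.e.\ iff $K,T[\rightleftarrows_k]K',T'$ (the right disjunct being false here). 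One should also check the degenerate case $T=\emptyset$: then the conjunction is empty (hence $\top$), and $\eta^k_{K,\emptyset}=p\wedge\neg p$ forces $T'=\emptyset$, which is correct since $K,\emptyset[\rightleftarrows_k]K',T'$ holds iff $T'=\emptyset$.

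The conceptual core, and the only genuinely delicate point, is recognizing that an inclusion atom of the shape $\top\subseteq\chi$ is exactly the device that asserts realization of a type, together with correctly threading the two empty-team corner cases ($T=\emptyset$ and $T'=\emptyset$) so that the disjunct ``or $T'=\emptyset$'' in the statement comes out right. Everything else is a direct combination of Lemma~\ref{k-equiv1} with the inclusion-atom semantics, requiring no substantial new argument.
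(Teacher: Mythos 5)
Your proof is correct, and its overall skeleton matches the paper's: treat $T=\emptyset$ separately, and otherwise conjoin $\eta^k_{K,T}$ from Lemma~\ref{k-equiv1} with finitely many \emph{unary} inclusion atoms built from $k$-Hintikka formulas, using Proposition~\ref{hintikka} and the empty team property to thread the corner cases. The difference is in which atoms you choose, and it genuinely changes the verification. The paper uses the quadratically many atoms $\chi_{K,u}^k \subseteq \chi_{K,v}^k$ for $u,v\in T$. Such an atom does not assert realization of $\chi_{K,v}^k$ outright: its semantics also requires that any element of $T'$ \emph{falsifying} $\chi_{K,u}^k$ be matched by an element falsifying $\chi_{K,v}^k$. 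Consequently the paper's forward direction must bootstrap: pick an arbitrary $u'\in T'$, use $\eta^k_{K,T}$ to find some type $\chi_{K,u}^k$ realized at $u'$, and only then does the atom $\chi_{K,u}^k \subseteq \chi_{K,v}^k$ yield a realization of $\chi_{K,v}^k$; and the backward direction needs a case distinction on whether $\chi_{K,u}^k$ and $\chi_{K,v}^k$ coincide or are mutually exclusive. Your atoms $\top\subseteq\chi_{K,w}^k$ (with $\top:=p\lor\lnot p$) instead assert unconditionally that $T'=\emptyset$ or $\chi_{K,w}^k$ is realized in $T'$, so both directions reduce to a direct unwinding of the inclusion-atom semantics, with only linearly many atoms (one per type realized in $T$) and no mutual-exclusivity argument. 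Both constructions use only unary inclusion atoms over \ML-formulas, so the paper's concluding remark that $n$-ary inclusion atoms are eliminable in favour of unary ones is equally supported by your version; what your version buys is a shorter defining formula and a visibly simpler correctness proof, at the negligible cost of introducing the derived constant $\top$.
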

\begin{proof}
If $T$ is the empty team, then $K, T [\rightleftarrows_k] K',T'$ holds if and only if $T'$ is empty, as well. Thus, in this case we can simply set $\eta^k_{K,T } = p\land \lnot p$, where $p\in \Phi$.
Else we let $\psi^k_{K,T}$ to be the formula 
\[\eta_{K,T}^k\land\bigwedge_{u, v \in T} \left( \chi_{K,u}^k  \subseteq \chi_{K,v}^k \right),\] 
where $\eta_{K,T}^k$ is as in Lemma~\ref{k-equiv1}. The conjunction $\bigwedge_{u,v \in T }$ is finite by the same argument as in the proof of Lemma~\ref{k-equiv1},  whence $\psi^k_{K,T} \in \MINC$.

Assume first that $K',T' \models \psi^k_{K,T}$, but $T' \neq \emptyset$. To prove that $K, T [\rightleftarrows_k] K',T'$ it suffices to show that for every $w\in T$ there exists $w'\in T'$ such that $K, w \rightleftarrows_k K',w'$, since the converse holds by Lemma~\ref{k-equiv1}. Thus, let $v\in T$. Choose an arbitrary $u'\in T'$. Since $K',T'\models\eta_{K,T}^k$, there is $u\in T$ such that $K',u'\models\chi_{K,u}^k$. Furthermore, since $K',T'\models \chi_{K,u}^k  \subseteq \chi_{K,v}^k$, there exists $v'\in T'$ such that $K',v'\models\chi_{K,v}^k$. It follows  now from Proposition \ref{hintikka} that $K, v \rightleftarrows_k K',v'$.

For the other direction assume that $K, T [\rightleftarrows_k] K',T'$ or $T'=\emptyset$. If $T'=\emptyset$, then $K',T'\models\psi^k_{K,T}$ by the empty team property, so we may assume that $K, T [\rightleftarrows_k] K',T'$. Then for every $w'\in T'$ there exists $w\in T$
such that $K, w \rightleftarrows_k K',w'$, and hence by Lemma~\ref{k-equiv1}, $K',T'\models\eta_{K,T}^k$. 

We still need to prove that each inclusion atom $\chi_{K,u}^k  \subseteq \chi_{K,v}^k $, $u,v\in T$, is satisfied by the pair $(K',T')$.  Note that we may assume that $\chi_{K,u}^k$ and $\chi_{K,v}^k$ are mutually exclusive, since otherwise $\chi_{K,u}^k=\chi_{K,v}^k$ which means that $\chi_{K,u}^k  \subseteq \chi_{K,v}^k$ is valid. Since $K, T [\rightleftarrows_k] K',T'$, there exist $u',v'\in T'$ such that $K, u \rightleftarrows_k K',u'$ and $K, v \rightleftarrows_k K',v'$, and hence by Proposition~\ref{hintikka}, $K',u'\models \chi_{K,u}^k$ and $K',v'\models \chi_{K,v}^k$. Since $\chi_{K,u}^k$ and $\chi_{K,v}^k$ are mutually exclusive, it follows that $K',u'\nmodels \chi_{K,v}^k$. Thus, for every $w'\in T'$ either $K',w'\models\chi_{K,u}^k\iff K',u'\models\chi_{K,u}^k$, or $K',w'\models\chi_{K,u}^k\iff K',v'\models\chi_{K,u}^k$, and hence $K',T'\models \chi_{K,u}^k  \subseteq \chi_{K,v}^k$.
\end{proof}


We are finally ready to give the promised characterisation for the expressive power of \MINC. 

\begin{Theo} \label{mincchar} Let $\Phi$ be a finite set of proposition symbols and let $\cK \subseteq \cK\!\!\cT(\Phi)$. The class $\cK$ is definable in $\MINC$ if and only if it is closed under unions,  closed under $k$-bisimulation for some $k \in \mathbb{N}$ and has the empty team property. 
\end{Theo}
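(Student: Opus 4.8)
The proof breaks into the two implications, and only the converse carries real content. From left to right, everything is already in place: if $\cK=\|\varphi\|$ for some $\varphi\in\MINC(\Phi)$, then $\cK$ is closed under $k$-bisimulation for $k=\md(\varphi)$ by Proposition~\ref{k-bisim-closure}, closed under unions by Proposition~\ref{mincunion}, and has the empty team property by Proposition~\ref{emptyteam}. So the plan is to concentrate on the converse.

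Assume then that $\cK$ is closed under unions, closed under $k$-bisimulation for some fixed $k$, and has the empty team property. I would write down an explicit defining formula as a team-semantic disjunction of the formulas $\psi^k_{K,T}$ furnished by Lemma~\ref{k-equiv}, indexed by the members of $\cK$:
$$\Psi := \bigvee_{(K,T)\in\cK}\psi^k_{K,T}.$$
The first step is to argue that $\Psi$ is a legitimate $\MINC$-formula, i.e.\ that the disjunction is essentially finite. Since $\Phi$ is finite there are only finitely many Hintikka-formulas $\chi^k_{K,w}$ of modal depth $k$, and $\psi^k_{K,T}$ is assembled entirely from the set $\{\chi^k_{K,w}\mid w\in T\}$; by Proposition~\ref{hintikka} this set, and hence $\psi^k_{K,T}$ itself, depends only on the team $k$-bisimulation type of $(K,T)$. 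As there are finitely many such types, $\Psi$ has finitely many distinct disjuncts, and I keep one representative of each.

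It then remains to verify $\|\Psi\|=\cK$, and here all three closure properties enter through the split semantics of disjunction. For $\cK\subseteq\|\Psi\|$, given $(K',T')\in\cK$ I split $T'$ trivially, assigning all of $T'$ to the disjunct $\psi^k_{K',T'}$ (which occurs in $\Psi$ up to the chosen representative) and $\emptyset$ to every other disjunct; reflexivity of $[\rightleftarrows_k]$ together with Lemma~\ref{k-equiv} gives $K',T'\models\psi^k_{K',T'}$, while the remaining disjuncts are satisfied by the empty team via Proposition~\ref{emptyteam}. For $\|\Psi\|\subseteq\cK$, suppose $K',T'\models\Psi$; then $T'$ splits as $T'=\bigcup_{(K,T)}T'_{(K,T)}$ with $K',T'_{(K,T)}\models\psi^k_{K,T}$ for each disjunct. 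By Lemma~\ref{k-equiv} each piece satisfies $K,T[\rightleftarrows_k]K',T'_{(K,T)}$ or is empty; closure under $k$-bisimulation (for the nonempty pieces) and the empty team property (for the empty ones) place each $(K',T'_{(K,T)})$ in $\cK$, and finally closure under unions of $\cK$ delivers $(K',\bigcup_{(K,T)}T'_{(K,T)})=(K',T')\in\cK$.

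The genuinely delicate point is this last recombination. Because disjunction in team semantics is interpreted by splitting the team rather than by selecting a single true disjunct, one must resist reading $K',T'\models\Psi$ as ``some $\psi^k_{K,T}$ holds on all of $T'$''; it is precisely the reassembly of the pieces $(K',T'_{(K,T)})$ that forces the use of closure under unions, with the empty team property needed both to pad out the trivial split in one direction and to absorb empty pieces in the other. Everything else is bookkeeping, since the substantive construction---capturing the domain and range totality conditions of team $k$-bisimilarity simultaneously by one $\MINC$-formula---has already been carried out in Lemma~\ref{k-equiv}.
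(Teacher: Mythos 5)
Your proof is correct and follows essentially the same route as the paper: the same defining formula $\bigvee_{(K,T)\in\cK}\psi^k_{K,T}$, with Lemma~\ref{k-equiv} supplying each disjunct and the three closure properties used exactly as in the paper's argument. In fact you spell out two points the paper leaves implicit---the essential finiteness of the disjunction (finitely many formulas $\psi^k_{K,T}$ over finite $\Phi$) and the use of the empty team property to pad the trivial split in the inclusion $\cK\subseteq\|\Psi\|$---so nothing is missing.
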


\begin{proof}
Assume first that $\cK$ is definable in $\MINC$. Then by Proposition \ref{k-bisim-closure}, $\cK$ is closed under $k$-bisimulation for some $k \in \mathbb{N}$. By Proposition \ref{mincunion} it is closed under unions, and by Proposition \ref{emptyteam}, it has the empty team property.

For the other direction assume that  $\cK$ is closed under unions,  closed under $k$-bisimulation for some $k \in \mathbb{N}$ and has the empty team property.
Let $\varphi$ be the formula
\[ \bigvee_ {(K,T) \in \cK} \psi_{K,T}, \]
where  $\psi_{K,T}$ is defined as in the proof of Lemma \ref{k-equiv}. 
We prove that $\varphi$ defines $\cK$.

Assume first that $(K , T ) \in \cK$. By Lemma \ref{k-equiv} it holds that $K,T \models \psi_{K,T}$ and consequently, $K,T  \models \varphi$.

Assume for the other direction that $K' , T'  \models \varphi$. Then there are subsets $T'_{K,T} \subseteq T'$,  such that $K', T'_{K,T} \models \psi_{K,T}$ and $T' = \bigcup_{(K,T) \in \cK} T'_{K,T}$. By Lemma \ref{k-equiv} it holds that either  $K,T [\rightleftarrows_k] K', T'_{K,T}$ or $T'_{K,T}= \emptyset$. As the class $\cK$ is closed under $k$-bisimulation and has the empty team property, it follows that $(K',T'_{K,T}) \in \cK$. By closure under unions it follows that $(K',T')~\in~\cK$. \qedhere
\end{proof}


\section{Modal logic with nonemptiness operator}

In the proof of Lemma~\ref{k-equiv} inclusion atoms are used for expressing that for each $w\in T$, the Hintikka-formula $\chi_{K,w}^k$ is satisfied in at least one node of the team $T'$. In this section we show that the same can be achieved by adding a simple unary nonemptiness operator $\triangledown$ to  \ML. 

\begin{Definition}
Let $\Phi$ be as set of proposition symbols. The syntax of $\ML(\triangledown)(\Phi)$ is obtained by the following grammar
\[ \varphi := p \mid \neg p \mid (\varphi \land \varphi ) \mid ( \varphi \lor \varphi ) \mid \Diamond \varphi \mid \Box \varphi \mid  \neop \varphi. \] 
\end{Definition}

The semantics of the nonemptiness operator $\triangledown$ is defined as follows:
\begin{Definition}
$K, T \models \neop \varphi$ if and only if $T= \emptyset$ or there exists $S \subseteq T$ such that $S \neq \emptyset$ and $K,S \models \varphi$.
\end{Definition}

Next we will need the definition of modal depth in modal logic with nonempty disjunction. 

\begin{Definition}
The modal depth md($\theta$) of a formula of $\ML(\triangledown)$ is defined by adding the case of $ \neop \varphi$ 
\[\md(\neop \varphi)= \md (\varphi) \]  to Definition \ref{md}.
\end{Definition}

Now we can show that $\ML(\triangledown)$ is closed under $k$-bisimulation for some $k \in \mathbb{N}$. 
\begin{Proposition}\label{k-bisim-closure2}
Let $\Phi$ be a set of proposition symbols and let $\cK \subseteq \cK \!\! \cT (\Phi)$. If $\cK$ is definable in $\ML(\triangledown)$, then there exists a $k \in \mathbb{N}$ such that $\cK$ is closed under $k$-bisimulation.
\end{Proposition}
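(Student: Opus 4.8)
The plan is to prove this by structural induction on $\varphi \in \ML(\triangledown)$, showing that $\|\varphi\|$ is closed under $k$-bisimulation for $k = \md(\varphi)$. This mirrors exactly the strategy of Proposition~\ref{k-bisim-closure} for \MINC, so all the cases already treated there carry over verbatim: the literals $p, \neg p$ use $(\mathrm{R}_0)$ together with the definition of $0$-bisimulation; the disjunction case splits the team via Lemma~\ref{teilmengen}(iv); the conjunction case uses Lemma~\ref{n<k} to restrict the bisimulation degree down to the modal depths of the two conjuncts; and the diamond and box cases invoke Lemma~\ref{teilmengen}(i) and (iii) respectively, after noting that $\md(\Diamond\psi) = \md(\Box\psi) = \md(\psi)+1$. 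Thus the only genuinely new work is the inductive step for the nonemptiness operator $\neop$.

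For that step, suppose $\varphi = \neop\psi$, so that $k = \md(\varphi) = \md(\psi)$, and assume $K,T \models \varphi$ together with $K,T\,[\rightleftarrows_k]\,K',T'$. First I would dispose of the empty case: if $T' = \emptyset$, then $K',T' \models \varphi$ holds immediately by the empty team property of $\ML(\triangledown)$ (which can be established just as in Proposition~\ref{emptyteam}). Otherwise $T' \neq \emptyset$, and by the range totality condition $(\mathrm{R}_k)$ this forces $T \neq \emptyset$ as well, so the semantics of $\neop$ gives a nonempty $S \subseteq T$ with $K,S \models \psi$. The aim is to produce a nonempty $S' \subseteq T'$ that is $k$-bisimilar to $S$, which by the induction hypothesis will satisfy $\psi$ and hence witness $K',T' \models \neop\psi$.

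The construction of $S'$ is the main obstacle, since Lemma~\ref{teilmengen} does not directly furnish a bisimilar subteam of $T'$ matching an arbitrary subteam $S$ of $T$. I would build $S'$ by hand using the Hintikka-formula machinery of Proposition~\ref{hintikka}: set
\[
S' = \{\, w' \in T' \mid K,w \rightleftarrows_k K',w' \text{ for some } w \in S \,\}.
\]
It remains to check that $K,S\,[\rightleftarrows_k]\,K',S'$ and that $S' \neq \emptyset$. The range condition $(\mathrm{R}_k)$ for $(S,S')$ is immediate from the definition of $S'$. For the domain condition $(\mathrm{D}_k)$, take any $w \in S \subseteq T$; by $(\mathrm{D}_k)$ for $(T,T')$ there is $w' \in T'$ with $K,w \rightleftarrows_k K',w'$, and this $w'$ lies in $S'$ by construction, which simultaneously shows $S' \neq \emptyset$ since $S \neq \emptyset$. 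Hence $K,S\,[\rightleftarrows_k]\,K',S'$ with $S'$ nonempty, and the induction hypothesis applied to $\psi$ yields $K',S' \models \psi$, completing the case. The delicate point to get right is that the degree $k$ stays fixed throughout (because $\md(\neop\psi) = \md(\psi)$), so no degree-lowering via Lemma~\ref{n<k} is needed here, in contrast to the connective cases.
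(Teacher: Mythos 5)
Your proposal is correct and follows essentially the same route as the paper's proof: both reduce to the inductive cases already handled for \MINC and treat the $\neop$ case by defining $S' = \{w' \in T' \mid K,w \rightleftarrows_k K',w' \text{ for some } w \in S\}$ and verifying $K,S\,[\rightleftarrows_k]\,K',S'$ with $S'$ nonempty. Your write-up is in fact slightly more careful than the paper's, since you spell out the domain and range totality checks and handle the empty-team case explicitly via the equivalence of $T=\emptyset$ and $T'=\emptyset$ under team $k$-bisimilarity, which the paper dismisses as trivial.
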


\begin{proof}
Assume that $ \varphi \in \ML(\triangledown)$. In Proposition~\ref{k-bisim-closure} it was shown  by induction on $\theta$ that the class $\| \theta \|$ is closed under $k$-bisimulation for all $\theta \in \MINC$, where $k= \md(\theta)$. Since the cases for literals, conjunction, disjunction as well as diamond and box are the same, will just add the case of $\varphi$ containing the nonemptiness operator. 

Let $\varphi=  \neop \psi$ and let $K,T \models \varphi$. Assume that $T\neq \emptyset$, as otherwise the proof is trivial. Then there exists $S \subseteq T$ such that $S \neq \emptyset$ and  $K, S \models \psi$. Let $K,T [\rightleftarrows_k] K',T'$ where $k=\md(\varphi)= \md(\psi) $. Let  $S'= \{w' \in T' \mid K, w \leftrightarrows_k K', w' \text{ for some } w \in S \}$. Then by the definition of $k$-bisimulation it holds that $S'$ is a nonempty subsets of $T'$. Additionally it holds that $K,S [\rightleftarrows_k] K',S'$. Thus, by induction hypothesis it follows that $K', S' \models \psi$. Therefore, $K', T' \models \varphi$.  
\end{proof}

We can easily show that $\ML(\triangledown)$ is closed under unions.

\begin{Proposition} \label{nedisunion}
$\ML(\triangledown)$ is closed under unions.
\end{Proposition}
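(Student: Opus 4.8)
The plan is to prove Proposition~\ref{nedisunion} by structural induction on $\varphi \in \ML(\triangledown)$, exactly mirroring the union-closure proofs for \ML (Proposition~\ref{mlunion}(b)) and \MINC (Proposition~\ref{mincunion}). Since $\ML(\triangledown)$ differs from \ML only by the new operator $\triangledown$, and since the inductive steps for literals, the connectives $\land,\lor$, and the modalities $\Diamond,\Box$ are identical to those already handled for \ML, the only genuinely new work is the case $\varphi = \neop\psi$. I would state explicitly that the base and connective/modal cases carry over verbatim, then concentrate on the nonemptiness operator.

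For the main case, let $\varphi = \neop\psi$ and suppose $K, T_i \models \neop\psi$ for all $i \in I$; the goal is $K, \bigcup_{i\in I} T_i \models \neop\psi$. First I would dispose of the degenerate subcase where every $T_i$ is empty, for then $\bigcup_{i\in I}T_i=\emptyset$ and the conclusion holds immediately by the first disjunct of the semantics of $\triangledown$. Otherwise pick some $j\in I$ with $T_j\neq\emptyset$. By the semantics of $\triangledown$, from $K,T_j\models\neop\psi$ we obtain a nonempty $S\subseteq T_j$ with $K,S\models\psi$. This same $S$ witnesses the nonemptiness operator for the union: $S$ is nonempty, $S\subseteq T_j\subseteq\bigcup_{i\in I}T_i$, and $K,S\models\psi$ already holds, so no appeal to the induction hypothesis on $\psi$ is even needed. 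Hence $K,\bigcup_{i\in I}T_i\models\neop\psi$, completing the induction.

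I expect essentially no serious obstacle here: unlike the connectives, where one must combine witnessing subteams across all the $T_i$ and invoke the induction hypothesis, the operator $\triangledown$ only requires a \emph{single} nonempty witnessing subteam, and any one of the witnesses supplied by the hypotheses $K,T_i\models\neop\psi$ already lies inside the union. The only point demanding a little care is the trivial subcase $\bigcup_{i\in I}T_i=\emptyset$ (equivalently, all $T_i$ empty), which must be separated out because a nonempty witness cannot then be produced; this is exactly where the ``$T=\emptyset$'' clause in the definition of $\triangledown$ does the work. Thus the proof is short, and the structural part is inherited directly from the earlier propositions.

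\begin{proof}
The proof is by induction on $\varphi \in \ML(\triangledown)$. The cases for literals, the connectives $\land$ and $\lor$, and the modal operators $\Diamond$ and $\Box$ are exactly as in the proof of Proposition~\ref{mlunion}(b) for \ML. It remains to treat the nonemptiness operator.

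Let $\varphi = \neop\psi$ and assume that $K, T_i \models \neop\psi$ for all $i \in I$. If $T_i = \emptyset$ for every $i \in I$, then $\bigcup_{i \in I} T_i = \emptyset$, and hence $K, \bigcup_{i\in I} T_i \models \neop\psi$ by the semantics of $\triangledown$. Otherwise, choose some $j \in I$ with $T_j \neq \emptyset$. Since $K, T_j \models \neop\psi$, there exists a nonempty $S \subseteq T_j$ such that $K, S \models \psi$. As $S \neq \emptyset$, $S \subseteq T_j \subseteq \bigcup_{i \in I} T_i$ and $K, S \models \psi$, it follows directly from the semantics of $\triangledown$ that $K, \bigcup_{i \in I} T_i \models \neop\psi$.
\end{proof}
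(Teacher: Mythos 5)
Your proof is correct, and its overall shape --- structural induction on $\varphi$, with the literal, connective, and modal cases inherited verbatim from Proposition~\ref{mlunion}(b) --- matches the paper's. The one substantive case, $\varphi = \neop\psi$, is however handled differently. The paper takes, for each nonempty $T_i$, a nonempty witness $S_i \subseteq T_i$ with $K, S_i \models \psi$, and then uses $\bigcup_{i\in I} S_i$ as the witness for the union; showing that this combined witness satisfies $\psi$ implicitly invokes the induction hypothesis (union closure for $\psi$). You instead observe that a \emph{single} witness suffices: any nonempty $S \subseteq T_j$ with $K, S \models \psi$, taken from one nonempty $T_j$, is already a nonempty subset of $\bigcup_{i\in I} T_i$ satisfying $\psi$, because the semantics of $\triangledown$ is existential over subteams. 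This is slightly leaner --- in your version the $\triangledown$ case makes no use of the induction hypothesis at all --- and it also sidesteps the minor bookkeeping issue in the paper's argument of what $S_i$ should be for those $i$ with $T_i = \emptyset$. Both arguments are valid; the paper's union-of-witnesses gains nothing here, so your simplification is harmless and arguably cleaner.
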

\begin{proof}
By Proposition~\ref{mlunion} (b) \ML is closed under unions. The proof 
is done by structural induction on $\varphi$. We add the case of $\varphi$ containing the nonemptiness operator. 

Let $\varphi = \neop \psi $. Assume $K,T_i \models \varphi$ for all $i \in I$. For each $i \in I$ either $T_i$ is the empty set or there exists $S_i \subseteq T_i$ such that $S_i \neq \emptyset$ and $K,S_i \models \psi$. If all $T_i$ are empty $\bigcup_{i \in I} T_i$ is the empty set. Otherwise  $\bigcup_{i \in I} S_i$ is a nonempty subset of $\bigcup_{i \in I} T_i$ statisfying $\psi$. Thus, $K, \bigcup_{i \in I} T_i \models \neop \psi.$ 
\end{proof}

\begin{Proposition} \label{nedisempty}
$\ML(\triangledown)$ has the empty team property.
\end{Proposition}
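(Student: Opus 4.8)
The plan is to prove this by structural induction on formulas of $\ML(\triangledown)$, exactly mirroring the template set by Propositions \ref{mlunion}(c), \ref{emptyteam}, and the previous two propositions in this section. By Proposition \ref{mlunion}(c), basic modal logic already has the empty team property, so all the base cases (literals $p$ and $\neg p$) and the inductive steps for the connectives $\land$, $\lor$ and for the modalities $\Diamond$ and $\Box$ are handled precisely as in the $\ML$ proof. The only genuinely new case is when $\varphi = \neop\psi$, so that is where all the content lies.

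For the inductive step, I would argue as follows. Suppose the claim holds for $\psi$, i.e.\ $K,\emptyset\models\psi$ for every Kripke model $K$. I want to show $K,\emptyset\models\neop\psi$. This is immediate directly from the semantic clause for $\triangledown$: by definition, $K,T\models\neop\varphi$ holds whenever $T=\emptyset$. So taking $T=\emptyset$, the first disjunct of the semantic condition is satisfied outright, and we conclude $K,\emptyset\models\neop\psi$ without even needing the induction hypothesis.

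I expect there to be no real obstacle here: the empty team property is built directly into the defining clause of $\triangledown$, so the nonemptiness operator is the easiest possible case rather than a hard one. The one point worth stating carefully is simply that the semantic clause for $\neop$ explicitly lists $T=\emptyset$ as a sufficient condition, which makes the step a one-line observation. For completeness the proof should open by invoking Proposition \ref{mlunion}(c) to cover the $\ML$ fragment and state that the induction proceeds as in the earlier propositions, then dispatch the $\neop\psi$ case by the remark above. The whole proof is therefore essentially a single short paragraph.

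\begin{proof}
By Proposition~\ref{mlunion}~(c), \ML has the empty team property. The proof is by structural induction on $\varphi$, where the cases for literals, connectives, diamond and box are as in the case of \ML. We add the case of $\varphi$ containing the nonemptiness operator.

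Let $\varphi = \neop\psi$. By the semantics of the nonemptiness operator, $K,\emptyset \models \neop\psi$ holds, since the condition $T=\emptyset$ is satisfied. Thus, $K,\emptyset \models \varphi$.
\end{proof}
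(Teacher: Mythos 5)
Your proof is correct and follows exactly the same approach as the paper: invoke Proposition~\ref{mlunion}~(c) for the \ML cases, proceed by structural induction, and dispatch the $\neop\psi$ case by observing that the semantic clause for $\triangledown$ explicitly makes the empty team satisfy it. This matches the paper's own proof essentially word for word.
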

\begin{proof}
By Proposition~\ref{mlunion} (c) \ML has the empty team property. The proof is done by structural induction on $\varphi$. We add the case of $\varphi$ containing the nonemptiness operator. 

Let $\varphi = \neop \psi $. Then $K,T \models \varphi$ if $T=\emptyset$ by the definition of the nonemptiness operator.
\end{proof}

%

Instead of the inclusion atom as in Lemma~\ref{k-equiv} we now use the nonemptiness operator to extend Lemma \ref{k-equiv1} to cover also the domain totality condition $(\mathrm{D}_k)$ of team $k$-bisimilarity (except for $T'=\emptyset$). 

\begin{Lemma}\label{k-equiv2}
Let $\Phi$ be a finite set of propositon symbols and $k \in \mathbb{N}$. For any pair $(K,T) \in \cK\! \!\cT (\Phi)$ there exists a formula $\zeta_{K,T} \in \ML(\triangledown) (\Phi)$ such that $K',T' \models \zeta_{K,T }$ if and only if $K, T [\rightleftarrows_k] K',T'$ or $T'=\emptyset$.
\end{Lemma}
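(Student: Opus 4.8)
The plan is to mirror the structure of the proof of Lemma~\ref{k-equiv}, replacing the conjunction of inclusion atoms with a conjunction of formulas built from the nonemptiness operator $\triangledown$. Recall that in Lemma~\ref{k-equiv}, the formula $\eta_{K,T}^k$ of Lemma~\ref{k-equiv1} already captures the range totality condition $(\mathrm{R}_k)$: it says that every $w'\in T'$ is $k$-bisimilar to some $w\in T$. What remains is to express the domain totality condition $(\mathrm{D}_k)$, namely that for every $w\in T$ there is some $w'\in T'$ with $K,w\rightleftarrows_k K',w'$. By Proposition~\ref{hintikka}, the latter is equivalent to requiring that for each $w\in T$, the Hintikka-formula $\chi_{K,w}^k$ is satisfied by at least one node of $T'$. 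This is precisely what $\triangledown\chi_{K,w}^k$ asserts (modulo the empty team), since $K',T'\models\triangledown\chi_{K,w}^k$ holds iff $T'=\emptyset$ or some nonempty subteam satisfies $\chi_{K,w}^k$, and by the flatness property (Proposition~\ref{T->v}) a nonempty team satisfies the \ML-formula $\chi_{K,w}^k$ iff some node does.

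Concretely, I would handle the degenerate case $T=\emptyset$ exactly as in Lemma~\ref{k-equiv}, setting $\zeta_{K,T}=p\land\lnot p$. Otherwise I would define
\[
\zeta_{K,T}:=\eta_{K,T}^k\land\bigwedge_{w\in T}\neop\chi_{K,w}^k,
\]
which lies in $\ML(\triangledown)$ because $\Phi$ is finite, so only finitely many distinct Hintikka-formulas occur and the conjunction is essentially finite. For the forward direction, I would assume $K',T'\models\zeta_{K,T}$ with $T'\neq\emptyset$. The conjunct $\eta_{K,T}^k$ gives $(\mathrm{R}_k)$ directly by Lemma~\ref{k-equiv1}. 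For $(\mathrm{D}_k)$, fix $w\in T$; since $K',T'\models\triangledown\chi_{K,w}^k$ and $T'\neq\emptyset$, there is a nonempty $S'\subseteq T'$ with $K',S'\models\chi_{K,w}^k$, hence by Proposition~\ref{T->v} some $w'\in S'\subseteq T'$ satisfies $\chi_{K,w}^k$, and Proposition~\ref{hintikka} yields $K,w\rightleftarrows_k K',w'$. Thus $K,T[\rightleftarrows_k]K',T'$.

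For the converse, I would assume $K,T[\rightleftarrows_k]K',T'$ or $T'=\emptyset$. If $T'=\emptyset$, then $K',T'\models\zeta_{K,T}$ by the empty team property (Proposition~\ref{nedisempty}), so suppose $K,T[\rightleftarrows_k]K',T'$. The range condition $(\mathrm{R}_k)$ gives $K',T'\models\eta_{K,T}^k$ by Lemma~\ref{k-equiv1}. For each $w\in T$, the domain condition $(\mathrm{D}_k)$ supplies some $w'\in T'$ with $K,w\rightleftarrows_k K',w'$, so $K',w'\models\chi_{K,w}^k$ by Proposition~\ref{hintikka}; the singleton $\{w'\}$ is then a nonempty subteam of $T'$ satisfying $\chi_{K,w}^k$, whence $K',T'\models\triangledown\chi_{K,w}^k$. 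Taking the conjunction over all $w\in T$ gives $K',T'\models\zeta_{K,T}$.

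I do not expect any genuine obstacle here: the argument is a direct simplification of Lemma~\ref{k-equiv}, and in fact $\triangledown$ is better suited to the task than inclusion atoms, since the subtle mutual-exclusivity reasoning needed in Lemma~\ref{k-equiv} to verify the atoms $\chi_{K,u}^k\subseteq\chi_{K,v}^k$ disappears entirely. The only point requiring a little care is the interaction with the empty team: one must keep the disjunct $T'=\emptyset$ in the statement (the formula cannot distinguish the empty team), and one must use the flatness of the \ML-formulas $\chi_{K,w}^k$ together with Proposition~\ref{T->v} to pass between "some node of $T'$ satisfies $\chi_{K,w}^k$" and "some nonempty subteam satisfies $\chi_{K,w}^k$".
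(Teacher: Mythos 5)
Your proposal is correct and follows essentially the same route as the paper: the same formula $\eta_{K,T}^k\land\bigwedge_{w\in T}\neop\chi_{K,w}^k$ (with $p\land\lnot p$ for $T=\emptyset$), the same case analysis, and the same use of Lemma~\ref{k-equiv1}, Proposition~\ref{T->v}, and Proposition~\ref{hintikka}. No gaps.
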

\begin{proof}
If $T$ is the empty team, we let $\zeta_{K,T } = p\land\lnot p$ for some $p\in\Phi$. Then $K',T' \models \zeta^k_{K,T}$ if and only if $T'=\emptyset$. Else we let $\zeta_{K,T}$ to be the formula 
\[\eta_{K,T}^k\land  \bigwedge_{w \in T} \neop \chi_{K,w}^k,\] 
where $\eta_{K,T}^k$ is as in Lemma~\ref{k-equiv1} and $\chi_{K,w}^k$ is the $k$-th Hintikka-formula of the pair $(K, w)$. 

Assume first that $K',T' \models \zeta_{K,T}$, but $T' \neq \emptyset$. As $K', T'  \models  \bigwedge_{w \in T} \neop \chi_{K,w}^k$ it holds that for all $w \in T$ there exists a nonempty subset $T'_w \subseteq T'$ such that $K', T'_w \models \chi_{K,w}^k$. By  Proposition~\ref{T->v} for all $w' \in T'_w$ it holds that $K', w' \models \chi_{K,w}^k$. It follows that by Proposition \ref{hintikka} for all $w \in T$ there exists $w' \in T'$ such that $K,w \rightleftarrows_k K',w'$. From Lemma~\ref{k-equiv1} it follows that for all $w' \in T'$ there exists a $w \in T$ such that  $K,w \rightleftarrows_k K',w'$. Therefore, $K, T [\rightleftarrows_k] K',T'$. 

For the other direction assume that $K, T [\rightleftarrows_k] K',T'$ or $T'=\emptyset$. If $T'=\emptyset$, then $K',T'\models\zeta^k_{K,T}$ by the empty team property, so we may assume that $K, T [\rightleftarrows_k] K',T'$. Then for every $w'\in T'$ there exists $w\in T$ such that $K, w \rightleftarrows_k K',w'$, and hence by Lemma~\ref{k-equiv1}, $K',T'\models\eta_{K,T}^k$. Additionally for all $w \in T$ there exists $w' \in T'$ such that $K, w \rightleftarrows_k K',w'$. Thus, by Proposition \ref{hintikka} for every $w \in T$ there exists some $w' \in T'$ such that $K',w' \models \chi_{K,w}^k$. Thus, $K', T'$  satisfies $\neop \chi_{K,w}^k$ for every $w \in T$ and therefore $K', T' \models  \bigwedge_{w \in T} \neop \chi_{K,w}^k $. Hence, $K',T' \models \zeta_{K,T}$.
\end{proof}

Next we characterise the expressive power of modal logic with nonemptiness operator in a similar way as in the case of modal logic with intuitionistic disjunction in \cite{epmdl}.

\begin{theorem}\label{disjcharact}
 Let $\Phi$ be a finite set of proposition symbols and let $\cK \subseteq \cK\! \!\cT(\Phi)$. The class $\cK$ is definable in $\ML(\triangledown)$ if and only if $\cK$ closed under unions and closed under $k$-bisimulation for some $k \in \mathbb{N}$ and satisfies the empty team property.
\end{theorem}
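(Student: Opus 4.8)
The plan is to follow the proof of Theorem~\ref{mincchar} essentially verbatim, substituting each $\MINC$-ingredient by the $\ML(\triangledown)$-counterpart already established in this section. For the forward direction I would argue directly: if $\cK$ is definable in $\ML(\triangledown)$, then Proposition~\ref{k-bisim-closure2} gives closure under $k$-bisimulation for some $k$, Proposition~\ref{nedisunion} gives closure under unions, and Proposition~\ref{nedisempty} gives the empty team property. Once these three propositions are in place, this direction is immediate.

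For the converse I would take a class $\cK$ enjoying the three closure properties, let $k$ be the degree witnessing closure under $k$-bisimulation, and define the formula
\[ \varphi := \bigvee_{(K,T)\in\cK} \zeta_{K,T}, \]
where $\zeta_{K,T}$ is the $\ML(\triangledown)$-formula supplied by Lemma~\ref{k-equiv2}. The two inclusions are then proved exactly as in Theorem~\ref{mincchar}. If $(K,T)\in\cK$, then $K,T [\rightleftarrows_k] K,T$ holds trivially, so Lemma~\ref{k-equiv2} yields $K,T\models\zeta_{K,T}$ and hence $K,T\models\varphi$. Conversely, if $K',T'\models\varphi$, the team semantics for disjunction produces a decomposition $T'=\bigcup_{(K,T)\in\cK}T'_{K,T}$ with $K',T'_{K,T}\models\zeta_{K,T}$; Lemma~\ref{k-equiv2} then gives that each piece satisfies either $K,T [\rightleftarrows_k] K',T'_{K,T}$ or $T'_{K,T}=\emptyset$, so closure under $k$-bisimulation together with the empty team property place every $(K',T'_{K,T})$ in $\cK$, and closure under unions finally delivers $(K',T')\in\cK$.

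The one point needing genuine attention is that $\varphi$ must be a legitimate \emph{finite} $\ML(\triangledown)$-formula rather than a formal infinite disjunction, since a priori the index set $\cK$ is an arbitrary, possibly infinite, class of models with teams. Here I would invoke the finiteness of $\Phi$: for the fixed degree $k$ there are only finitely many Hintikka-formulas $\chi^k_{K,w}$, hence only finitely many distinct formulas $\eta^k_{K,T}$, and therefore only finitely many distinct formulas $\zeta_{K,T}$ up to logical equivalence. Thus the displayed disjunction collapses to a finite one and lies in $\ML(\triangledown)$. Apart from this bookkeeping I expect no new obstacle to arise, since all the semantic content has been offloaded to Lemma~\ref{k-equiv2} and to Propositions~\ref{k-bisim-closure2}, \ref{nedisunion} and~\ref{nedisempty}; the argument is then a direct transcription of the $\MINC$ proof with $\zeta_{K,T}$ in place of $\psi_{K,T}$.
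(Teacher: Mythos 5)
Your proposal is correct and is exactly the paper's proof: the authors dispatch this theorem by saying that one substitutes $\zeta_{K,T}$ for $\psi_{K,T}$ in the proof of Theorem~\ref{mincchar}, with the forward direction supplied by Propositions~\ref{k-bisim-closure2}, \ref{nedisunion} and~\ref{nedisempty}. Your additional remark on why the disjunction $\bigvee_{(K,T)\in\cK}\zeta_{K,T}$ is essentially finite (finitely many Hintikka-formulas for fixed $k$ and finite $\Phi$) is a point the paper leaves implicit, and it is handled correctly.
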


\begin{proof}
Substitute $\psi_{K,T}$ in the proof of Theorem~\ref{mincchar} with $\zeta_{K,T}$. Otherwise the proofs are the same. \qedhere
\end{proof}

From this result and Theorem \ref{mincchar} it follows that the expressive power of $\ML(\triangledown)$ and \MINC coincide.

\begin{Corollary} \label{maincor}
$\ML( \triangledown) \equiv \MINC$.
\end{Corollary}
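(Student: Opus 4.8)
The plan is to read the equivalence off directly from the two characterisation theorems, since they characterise the definable classes of the two logics by one and the same list of closure properties. Recall that two logics with team semantics are equivalent precisely when they define the same classes of $\Phi$-models with teams, i.e. $\{\|\varphi\| : \varphi\in\ML(\triangledown)\} = \{\|\psi\| : \psi\in\MINC\}$; so it suffices to show that a class is $\ML(\triangledown)$-definable if and only if it is $\MINC$-definable. Before invoking the theorems, I would reduce to the finite-$\Phi$ setting: any formula of either logic mentions only finitely many proposition symbols, so letting $\Phi$ be that finite set, the class $\|\varphi\|$ it defines lives in $\cK\!\!\cT(\Phi)$ with $\Phi$ finite, and both Theorem~\ref{mincchar} and Theorem~\ref{disjcharact} apply verbatim.

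For the inclusion $\MINC\leq\ML(\triangledown)$, I would take any $\varphi\in\MINC$. By the ``only if'' direction of Theorem~\ref{mincchar} (equivalently, by Propositions~\ref{k-bisim-closure}, \ref{mincunion} and~\ref{emptyteam}), the class $\|\varphi\|$ is closed under unions, closed under $k$-bisimulation for some $k\in\mathbb{N}$, and has the empty team property. But these are exactly the three hypotheses of the ``if'' direction of Theorem~\ref{disjcharact}, so $\|\varphi\|$ is definable in $\ML(\triangledown)$. The reverse inclusion $\ML(\triangledown)\leq\MINC$ is entirely symmetric, using the ``only if'' direction of Theorem~\ref{disjcharact} (Propositions~\ref{k-bisim-closure2}, \ref{nedisunion} and~\ref{nedisempty}) followed by the ``if'' direction of Theorem~\ref{mincchar}. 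Combining the two inclusions yields $\ML(\triangledown)\equiv\MINC$.

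There is essentially no hard step here: the content of the corollary is already contained in the two expressive-completeness theorems, and the work has all been done in establishing that both logics have precisely the same semantic characterisation. The only point requiring care is the bookkeeping with $\Phi$, namely ensuring that the finiteness assumption of the characterisation theorems is met by restricting attention to the finite set of proposition symbols occurring in the formula under consideration; once that is observed, the statement follows immediately.
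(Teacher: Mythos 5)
Your proof is correct and follows exactly the paper's route: the corollary is read off directly by combining the two characterisation theorems (Theorem~\ref{mincchar} and Theorem~\ref{disjcharact}), which is precisely what the paper does, stating it as an immediate consequence. Your extra remark about restricting to the finite set of proposition symbols occurring in a given formula is a reasonable bookkeeping point that the paper leaves implicit, but it does not change the argument.
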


\subsection*{Nonempty disjunction}
Another way to express nonemptiness is to define a disjunction which requires that each of the disjuncts is satisfied in a nonempty subset. This connective was first suggested by Raine Rönnholm \cite{raine}. 

\begin{Definition}
$K, T \models  \varphi \nedis \psi$ if and only if $T= \emptyset$ or there exist $T_1, T_2 \neq \emptyset$ such that $T= T_1 \cup T_2$ and $K, T_1 \models \varphi$ and $K, T_2 \models \psi$.
\end{Definition}

Closure under $k$-bisimulation, closure under unions and the empty team property also hold for modal logic with nonempty disjunction. Using the nonempty disjunction produces results analogous to those obtained in~\cite{epmdl}. The formula $\varphi$ in the proof of Theorem~\ref{disjcharact} can for instance be formulated as
\[\bigvee_{(K,T) \in \cK}  \bnedis\displaylimits_{w\in T} \chi_{K,w}^k.\]

It is easy to see that the nonempty disjunction has the same expressive power as the nonemptiness operator.
\begin{Proposition}\label{nedisjunct} The following equivalences hold:
\begin{enumerate}[i)]
\item $\neop \varphi \equiv \varphi \nedis \top$
\item $\varphi \nedis \psi \equiv (\varphi \lor \psi) \land (\neop \varphi \land \neop \psi)$
\end{enumerate}
\end{Proposition}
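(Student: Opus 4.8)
The plan is to verify both semantic equivalences directly by unfolding the definitions of the operators involved, checking that the two sides hold in exactly the same pairs $(K,T)$. Both equivalences become trivial when $T=\emptyset$ because every formula of $\ML(\triangledown)$ has the empty team property (Propositions~\ref{nedisempty} and the analogous fact for nonempty disjunction), so in each case I would dispose of the empty team immediately and then assume $T\neq\emptyset$.

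For part (i), I would first observe that $\top$ (definable, e.g., as $p\lor\lnot p$) is satisfied by every team, including every nonempty one. Assuming $T\neq\emptyset$, I would argue as follows. If $K,T\models\neop\varphi$, then there is a nonempty $S\subseteq T$ with $K,S\models\varphi$; taking $T_1=S$ and $T_2=T$, both are nonempty, $T_1\cup T_2=T$, $K,T_1\models\varphi$, and $K,T_2\models\top$, so $K,T\models\varphi\nedis\top$. Conversely, if $K,T\models\varphi\nedis\top$, there are nonempty $T_1,T_2$ with $T=T_1\cup T_2$ and $K,T_1\models\varphi$; since $T_1$ is a nonempty subset of $T$ satisfying $\varphi$, we get $K,T\models\neop\varphi$.

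For part (ii), again assume $T\neq\emptyset$ and unfold both sides. If $K,T\models\varphi\nedis\psi$, there are nonempty $T_1,T_2$ with $T=T_1\cup T_2$, $K,T_1\models\varphi$ and $K,T_2\models\psi$; this witnesses $K,T\models\varphi\lor\psi$, and the nonempty subsets $T_1,T_2$ witness $K,T\models\neop\varphi$ and $K,T\models\neop\psi$, so the right-hand side holds. For the converse, suppose $K,T\models(\varphi\lor\psi)\land(\neop\varphi\land\neop\psi)$. From $\neop\varphi$ and $\neop\psi$ we obtain nonempty $S_\varphi\subseteq T$ with $K,S_\varphi\models\varphi$ and nonempty $S_\psi\subseteq T$ with $K,S_\psi\models\psi$. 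The disjunct $\varphi\lor\psi$ gives a split $T=U_1\cup U_2$ with $K,U_1\models\varphi$ and $K,U_2\models\psi$. I would then form the split $T_1=U_1\cup S_\varphi$ and $T_2=U_2\cup S_\psi$: these are both nonempty (as $S_\varphi,S_\psi\neq\emptyset$), their union is $T$, and by closure under unions (Proposition~\ref{nedisunion}) $K,T_1\models\varphi$ and $K,T_2\models\psi$, so $K,T\models\varphi\nedis\psi$.

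The main subtlety, and the only nonroutine point, is precisely this last step: the raw lax-disjunction split $U_1,U_2$ need not consist of nonempty pieces, so one cannot directly reuse it as a witness for $\nedis$. The fix is to enlarge each piece by a nonempty subset satisfying the corresponding disjunct, which is exactly what the conjuncts $\neop\varphi$ and $\neop\psi$ provide, and then invoke closure under unions to preserve satisfaction. All other verifications are immediate from the definitions.
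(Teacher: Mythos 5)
Your proof is correct, and at the one point where it diverges from the paper's proof it is actually \emph{more} careful than the paper. Both arguments proceed by direct unfolding of the semantics: your part (i) and the forward direction of (ii) match the paper's almost verbatim (in (i) you take $T_2=T$ as the witness for $\top$, which avoids the paper's case distinction on whether $T\setminus T'$ is empty --- a minor simplification). The genuine difference is the backward direction of (ii). The paper takes the split $T=T_1\cup T_2$ witnessing $K,T\models\varphi\lor\psi$ and then asserts that, because $K,T\models\neop\varphi\land\neop\psi$, ``$T_1$ and $T_2$ are both nonempty.'' As written this is a non sequitur: the split coming from the lax disjunction and the nonempty subsets coming from $\neop\varphi$ and $\neop\psi$ are independent existential witnesses, and nothing prevents the chosen split from having an empty piece (e.g.\ $T_1=T$, $T_2=\emptyset$ is a legitimate split whenever $K,T\models\varphi$, by the empty team property). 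Your repair --- enlarging the split to $T_1=U_1\cup S_\varphi$, $T_2=U_2\cup S_\psi$ and invoking closure under unions (Proposition~\ref{nedisunion}) to preserve satisfaction of each disjunct --- is exactly what is needed, and you correctly identify it as the only nonroutine step in the whole proposition. So: same overall route, but your version closes a small gap in the paper's own write-up.
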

%


\section{A lower bound for the translation}

By Corollary \ref{maincor} we know that for each formula in \MINC there exists an equivalent formula in $\ML(\triangledown)$, and vice versa. To establish a lower bound for the size of the translation from \MINC to $\ML(\triangledown)$ we will use the semantic game of $\ML(\triangledown)$ where one player tries to show that a formula holds, while the other player tries to refute the formula. 
\smallskip

The following idea of using a function $F$ as a winning strategy in the semantic game is due to Jonni Virtema and is used in \cite{jonni}. Raine Rönnholm \cite{raine} introduced the idea in the context of inclusion and exclusion logic. 

\begin{Definition}\label{game}
Let $K=(W,R,V)$ be a Kripke model and $T$ a team in $W$.
Let $\varphi \in \ML(\triangledown)$. Let $F$ be a function mapping nodes $\psi$ in the syntax tree of $\varphi$ to subsets of $W$. We say that $F$ is a winning strategy in the semantic game for $(K,T)$ and $\varphi$ if it satisfies the following conditions:
\begin{enumerate}[i)]
\item $F(\varphi)=T$
\item If $\psi=p$ then $F(p)$ is such that $K,F(p) \models p$.
\item If $\psi=\neg p$ then $F(\neg p)$ is such that $K,F(\neg p) \models \neg p$.
\item If $\psi = \theta \land \eta$ then $F(\psi)=F(\theta)=F(\eta)$.
\item If $\psi = \theta \lor \eta$ then $F(\psi)=F(\theta)\cup F(\eta)$.
\item If $\psi = \neop \theta$ then $F(\psi)\supseteq F(\theta)$ and if $F(\psi) \neq \emptyset$ then $F(\theta) \neq \emptyset$.
\item If $\psi= \Diamond \theta$ then $F(\psi)[R]F(\theta)$
\item If $\psi= \Box \theta$ then $F(\theta)= R[F(\psi)]$
\end{enumerate}
\end{Definition}

In the semantic game such a function $F$ is a winning strategy for the first player, who tries to show that the formula holds. Thus, the following lemma holds. 

\begin{Lemma}\label{winningstrategy}
A function $F$ as given above exists if and only if $K, T \models \varphi$.
\end{Lemma}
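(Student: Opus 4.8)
The plan is to prove the equivalence by structural induction on $\varphi$, but to make the induction go through I first strengthen the statement. For a node $\psi$ in the syntax tree of $\varphi$, say that a \emph{subtree strategy} at $\psi$ is a function $F$ on the subtree rooted at $\psi$ satisfying clauses (ii)--(viii) of Definition~\ref{game} at every node of that subtree. The claim I would prove is: for every node $\psi$ and every team $S \subseteq W$, there is a subtree strategy $F$ with $F(\psi) = S$ if and only if $K, S \models \psi$. The lemma is then the special case where $\psi$ is the root and $S = T$, since the only role of clause (i) is to fix the value at the root. Working with the syntax tree, rather than a DAG of subformulas, is what keeps this clean: distinct occurrences are distinct nodes, so $F$ assigns teams independently and there is no sharing to reconcile.

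The induction itself is a case analysis matching each local condition in Definition~\ref{game} to the corresponding semantic clause for $\ML(\triangledown)$. For literals the claim is immediate from clauses (ii)--(iii). For $\psi = \theta \land \eta$, clause (iv) forces $F(\theta) = F(\eta) = S$, and by the induction hypothesis a strategy exists iff $K, S \models \theta$ and $K, S \models \eta$. For $\psi = \theta \lor \eta$, clause (v) corresponds exactly to choosing a split $S = S_1 \cup S_2$ with $F(\theta) = S_1$ and $F(\eta) = S_2$; for $\psi = \Diamond\theta$, clause (vii) corresponds to choosing $S'$ with $S[R]S'$; and for $\psi = \Box\theta$, clause (viii) forces $F(\theta) = R[S]$. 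In each case the induction hypothesis turns the existence of subtree strategies for the children into the satisfaction of the children, matching the semantic clause. The two directions are the same observation seen both ways: a global strategy restricts to strategies on each subtree, and conversely subtree strategies for the children glue into one for the parent, because every condition in Definition~\ref{game} is local, relating a node only to its immediate children.

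The case I expect to need the most care is $\psi = \triangledown\theta$, because of its interaction with the empty team. Clause (vi) demands $F(\theta) \subseteq S$ together with the requirement that $F(\theta)$ be nonempty whenever $S$ is. If $S \neq \emptyset$, this matches the semantic requirement of a nonempty subteam on which $\theta$ holds. If $S = \emptyset$, then $F(\theta) = \emptyset$ is forced, the nonemptiness requirement is vacuous, and I would invoke the empty team property of $\ML(\triangledown)$ (Proposition~\ref{nedisempty}) to conclude $K, \emptyset \models \theta$, so the induction hypothesis supplies the subtree strategy; conversely $K, \emptyset \models \triangledown\theta$ holds automatically. Setting up the generalisation to arbitrary teams at arbitrary subtree roots, and handling this empty-team bookkeeping, is the only real obstacle; the remaining cases are mechanical.
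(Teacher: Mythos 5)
Your proof is correct, and its mathematical core --- matching each clause of Definition~\ref{game} against the corresponding semantic clause of $\ML(\triangledown)$, case by case --- is the same as the paper's; the difference lies in how the induction is organized. The paper treats the two directions asymmetrically: for satisfaction-implies-strategy it defines $F$ by a top-down traversal of the syntax tree, maintaining the invariant $K, F(\psi) \models \psi$ at each node, and for strategy-implies-satisfaction it runs a separate bottom-up induction proving $K, F(\psi) \models \psi$ for every node. You instead prove one strengthened biconditional (a subtree strategy with value $S$ at $\psi$ exists iff $K, S \models \psi$, for arbitrary node $\psi$ and arbitrary team $S$) by a single structural induction, with explicit restriction and gluing of subtree strategies, and the lemma falls out as the instance at the root. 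What your packaging buys is rigour at exactly the spot the paper glosses over: in the case $\psi = \triangledown\theta$ with value $\emptyset$, clause (vi) forces $F(\theta) = \emptyset$, so the existence of the required subtree strategy amounts to $K, \emptyset \models \theta$, i.e.\ the empty team property (Proposition~\ref{nedisempty}); the paper's top-down construction silently assumes this when it asserts a subset $T' \subseteq F(\psi)$ with $K, T' \models \theta$ even when $F(\psi) = \emptyset$, whereas you cite it where it is needed. Otherwise the two arguments coincide clause for clause, and either organization would be acceptable here.
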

\begin{proof}
Assume that $K,T \models \varphi$.  Using traversal of the syntax tree of $\varphi$ from the root we will define $F$ such that if a node $\psi$ satisfies $K, F(\psi) \models \psi$  then each child $\theta$ of this node satisfies $K, F(\theta) \models \theta$. 

 \begin{enumerate}[i)]
\item Define $F(\varphi)=T$. Then by assumption $K, F(\varphi) \models \varphi$.
\setcounter{enumi}{3}
\item If $\psi = \theta \land \eta$, then by assumption $K, F(\psi) \models \psi$, thus $K, F(\psi) \models \theta$ and $K, F(\psi) \models \eta$ . We define $F(\theta)=F(\eta)= F(\psi)$. Then $K, F(\theta) \models \theta$ and $K, F(\eta) \models \eta$. 
\item If $\psi = \theta \lor \eta$, then by assumption $K,F(\psi) \models \psi$ and thus there exist subsets $T_1, T_2$ of $F(\psi)$ such that $F(\psi)= T_1 \cup T_2$ and $K,T_1 \models \theta$ and $K, T_ 2\models \eta$. We define  $F(\theta)= T_1$ and $F(\eta)=T_2$. It follows that $F(\psi)=F(\theta)\cup F(\eta)$ and $K, F(\theta) \models \theta$ and $K, F(\eta) \models \eta$.
\item If $\psi = \neop \theta $, then by assumption $K,F(\psi) \models \psi$ and thus there exist a subset $T' \subseteq F(\psi)$, with $T' \neq \emptyset$ if $F(\psi)\neq \emptyset$, such that  $K,T'\models \theta$. We define $F(\theta)=T'$. It follows that $F(\psi) \supseteq F(\theta)$,  $K, F(\theta) \models \theta$ and if $F(\psi) \neq \emptyset$ then $F(\theta) \neq \emptyset$.
\item If $\psi= \Diamond \theta$, then by assumption $K,F(\psi) \models \psi$ and thus there exists a set $T'$ such that $K,T' \models \theta$ and $F(\psi)[R]T'$. We define $F(\theta)= T'$. Thus $F(\psi)[R]F(\theta)$ and $K, F(\theta) \models \theta$.
\item If $\psi= \Box \theta$, then by assumption $K,F(\psi) \models \psi$ and thus $K,T' \models \theta$ for $T'=R[F(\psi)]$. We define $F(\theta)= R[F(\psi)]$. Thus $K, F(\theta) \models \theta$.
 \end{enumerate}
Clearly if $\psi=p$ or $\psi= \neg p$ then by assumption $K,F(p) \models p$ or $K,F(p) \models \neg p$. Thus $F$ is satisfies the conditions in Definition \ref{game}.

For the other direction assume that $F$ exists as given in Definition \ref{game}. We will show by induction on the nodes in the syntax tree of $\varphi$ that for each node $\psi$ it holds that $K, F(\psi) \models \psi$. Since $F(\varphi)=T$ it follows that $K, T \models \varphi$.
\begin{enumerate}[i)]
\setcounter{enumi}{1}
\item If $\psi = p$ then by assumption $F(p)$ is such that $K,F(p) \models p$.
\item If $\psi = \neg p$ then by assumption $F(\neg p)$ is such that $K,F(\neg p) \models \neg  p$.
\item If $\psi = \theta \land \eta$ then $F(\psi)=F(\theta)=F(\eta)$. By induction hypothesis $K, F(\theta) \models \theta$ and $K, F(\eta) \models \eta$. Thus, $K, F(\psi) \models \psi$.
\item If $\psi = \theta \lor \eta$ then $F(\psi)=F(\theta)\cup F(\eta)$. By induction hypothesis  $K, F(\theta) \models \theta$ and $K, F(\eta) \models \eta$. Thus, $K, F(\psi) \models \psi$.
\item If $\psi = \neop \theta$ then $F(\psi) \supseteq F(\theta)$ and if $F(\psi) \neq \emptyset$ then $F(\theta) \neq \emptyset$. By induction hypothesis  $K, F(\theta) \models \theta$. Thus, $K, F(\psi) \models \psi$.
\item If $\psi= \Diamond \theta$ then $F(\theta)[R]F(\psi)$. By induction hypothesis $K, F(\theta) \models \theta$. Thus, $K, F(\psi) \models \psi$.
\item If $\psi= \Box \theta$ then $F(\theta)= R[F(\psi)]$. By induction hypothesis $K, F(\theta) \models \theta$. Thus, $K, F(\psi) \models \psi$.
\end{enumerate}
As $F(\varphi)=T$ it follows finally that $K,T \models \varphi$.
\end{proof}

Let $K=(W,R,V)$ be a fixed Kripke model, where $R$ is the identity relation. In this case elements are only in relation to themselves. Thus, $\Diamond \psi \equiv \psi$ and $\Box \psi \equiv \psi$ for all $\psi \in \ML(\triangledown)$ on this model $K$.  Let $\varphi$ be a formula in $\ML(\triangledown)$ and let $T$ be a sufficiently large team in $K$ such that $K,T \models \varphi$. Let $F$ be a function as given in Definition \ref{game}. Now remove an element $a$ from $T$. Define $F'(\psi)= F(\psi) \backslash \{a\}$ for each $\psi$ in the syntax tree of $\varphi$. Assume that $F(\theta) \neq \{a\}$  for all subformulas of $\varphi$ which are of the form $\neop \theta$. We will show by induction on $\varphi$ that $F'$ satisfies the conditions given in Definition \ref{game} for $T \setminus \{a\}$ in place of $T$. 

 \begin{enumerate}[i)]
\item $F'(\varphi)=T \backslash \{a\}$.
\item $F'(p) \subseteq F(p) \subseteq V(p)$, thus $K, F'(p) \models p$. 
\item $F'(\neg p) \subseteq F(\neg p) \subseteq W\setminus V(p)$, thus $K, F'(\neg p) \models \neg p$. 
\item If $\psi = \theta \land \eta$, then by definition $F'(\psi) = F(\psi) \setminus \{a\}= F(\theta) \setminus \{a\} = F(\eta) \setminus \{a\}$, thus $F'(\psi)=F'(\theta)=F'(\eta)$.
\item If $\psi = \theta \lor \eta$, then by definition $F'(\psi) = F(\psi) \setminus \{a\}= F(\theta) \setminus \{a\}  \cup F(\eta) \setminus \{a\}=F'(\theta) \cup F'(\eta)$.
\item If $\psi = \neop \theta$, then by definition $F'(\psi) = F(\psi) \setminus \{a\} \supseteq F(\theta) \setminus \{a\}= F'(\theta)$ and $F'(\theta) \neq \emptyset$, since $F(\theta) \neq \{a\}$.
\item If $\psi= \Diamond \theta$, then as $\psi \equiv \theta$ we see that $F(\psi)=F(\theta)$. Then, $F'(\psi)=F'(\theta)$ and it follows that $F'(\psi)[R]F'(\theta)$.
\item If $\psi= \Box \theta$, then  as $\psi \equiv \theta$ we see that $F(\psi)=R[F(\psi)]=F(\theta)$. Thus, $F'(\psi)=R[F'(\psi)]=F'(\theta)$.
\end{enumerate}

Thus, if $F(\theta)\neq \{a\}$ for all subformulas of $\varphi$ which are of the form $\neop \theta$ then $F'$ satisfies the conditions given by Definition \ref{game}, and by Lemma \ref{winningstrategy} it holds that $K, T \backslash \{a\} \models \varphi$. This means that there are at most $ \occv(\varphi)$ elements which cannot be removed from the team without changing the truth of $\varphi$. Here $\occv(\varphi)$ is the number of occurrences of the operator $\triangledown$  in the formula $\varphi$.

Assume $K,T \models p_1\ldots p_n \subseteq q_1\ldots q_n$. Let $\varphi \in \ML(\triangledown)$ be such that it defines this inclusion atom.  We will look at the set $A_T=\{a \in T \mid K, T \backslash \{a\} \nmodels p_1\ldots p_n \subseteq q_1\ldots q_n \}$ of those elements of $T$ which are essential for the satisfaction of the inclusion atom. If $|A_T| > \occv(\varphi)$, then at least one element $a$ which does not interfere with the satisfaction of the formula $\varphi$ can be removed: $K, T\backslash \{a\}\models \varphi$ but $K,T \backslash \{a\} \nmodels p_1\ldots p_n \subseteq q_1\ldots q_n$. Therefore, $\varphi$ does not define the inclusion atom $ p_1\ldots p_n \subseteq q_1\ldots q_n$. It follows that a formula  $\varphi \in \ML(\triangledown)$ which defines the inclusion atoms needs to satisfy  $\occv(\varphi) \geq |A_T|$.

\smallskip

\hspace{-7mm}
\begin{minipage}[c]{0.8 \textwidth}
\begin{example} \label{example}
Let $\Phi=\{p_1, p_2, q_1, q_2\}$ and $K=(W,R,V)$ a Kripke model, where $W=\{w_{a_1 a_2 b_1 b_2} \mid a_1,a_2,b_1,b_2 \in \{0,1\}\}$, $R= \text{id}$, $V(p_i)=\{w_{a_1 a_2 b_1 b_2} \mid a_i =1\}$ and $V(q_j)=\{w_{a_1 a_2 b_1 b_2} \mid b_j =1\}$. The team $T=\{w_{0001},w_{0110}, w_{1011}, w_{1100} \}$ satisfies the inclusion atom $p_1 p_2 \subseteq q_1 q_2$, but no proper subteam of $T$ satisfies it. Thus, there are at least four occurrences of $\triangledown$ needed to describe this inclusion atom in $\ML(\triangledown)$.

\end{example}
The way the team $T$ in the example is built can be extended in a straightforward manner to arbitrary arity $n$ of the inclusion atom. 
\end{minipage}
\begin{minipage}[c]{0.25\textwidth}
\[
\xy
\xygraph{%
!{<0cm,0cm>;<1cm,0cm>:<0cm,1cm>::}
!{(0,2)}*{p_1 p_2}="p"
!{(1,2)}*{ q_1 q_2}="q"
!{(0,0)}*{\bullet_{00}}="00"
!{(0,0.5)}*{\bullet_{01}}="01"
!{(0,1)}*{\bullet_{10}}="10"
!{(0,1.5)}*{\bullet_{11}}="11"
!{(1,0)}*{\bullet_{00}}="200"
!{(1,0.5)}*{\bullet_{01}}="201"
!{(1,1)}*{\bullet_{10}}="210"
!{(1,1.5)}*{\bullet_{11}}="211"
"00"-"201" "01"-"210" "10"-"211" "11"-"200"
}
\endxy
\]
\end{minipage}

\smallskip

\begin{theorem} \label{size}
A formula $\varphi$  in modal logic with  nonemptiness operator which describes the inclusion atom $p_1\ldots p_n \subseteq q_1\ldots q_n$ contains at least $ 2^{n}$ symbols.
\end{theorem}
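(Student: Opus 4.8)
The plan is to produce, for every $n\ge 1$, one Kripke model $K$ together with a \emph{critical} team $T$ of size $2^n$ satisfying $p_1\ldots p_n\subseteq q_1\ldots q_n$, and then to feed $T$ into the node-removal argument developed just before this statement in order to force $\occv(\varphi)\ge 2^n$; since each occurrence of $\triangledown$ is itself a symbol, this yields the claimed $2^n$ symbols. The model is the $n$-ary version of Example~\ref{example}: let $W=\{w_{ab}\mid a,b\in\{0,1\}^n\}$, $R=\mathrm{id}$, and put $w_{ab}\in V(p_i)$ iff $a_i=1$ and $w_{ab}\in V(q_j)$ iff $b_j=1$. Then the tuple of truth values of $(p_1,\ldots,p_n)$ at $w_{ab}$ is exactly the bit string $a$, and that of $(q_1,\ldots,q_n)$ is $b$; hence for any team $S\subseteq W$ the atom $p_1\ldots p_n\subseteq q_1\ldots q_n$ holds in $S$ if and only if the set of $a$-parts occurring in $S$ is contained in the set of its $b$-parts.

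First I would fix a fixed-point-free permutation $\sigma$ of $\{0,1\}^n$, for instance the cyclic successor $x\mapsto x+1\bmod 2^n$ (a derangement since $2^n>1$), and set $T=\{w_{x,\sigma(x)}\mid x\in\{0,1\}^n\}$, which is exactly the $n=2$ pattern of Example~\ref{example}. Because $x\mapsto x$ and $x\mapsto\sigma(x)$ both enumerate all of $\{0,1\}^n$, the $a$-parts and the $b$-parts of $T$ are each the full cube $\{0,1\}^n$, and therefore $K,T\models p_1\ldots p_n\subseteq q_1\ldots q_n$.

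Next I would prove that $T$ is critical, that is, that with the notation preceding the theorem $A_T=T$ and so $|A_T|=2^n$. Since $w_{x,\sigma(x)}\mapsto x$ and $w_{x,\sigma(x)}\mapsto\sigma(x)$ are bijections of $T$ onto $\{0,1\}^n$, deleting a single node $w_{x_0,\sigma(x_0)}$ leaves $a$-parts $\{0,1\}^n\setminus\{x_0\}$ and $b$-parts $\{0,1\}^n\setminus\{\sigma(x_0)\}$. The atom survives precisely when $\{0,1\}^n\setminus\{x_0\}\subseteq\{0,1\}^n\setminus\{\sigma(x_0)\}$, which happens only if $\sigma(x_0)=x_0$; otherwise the profile $\sigma(x_0)$ is still demanded as an $a$-part while its unique provider, the deleted node (the only node whose $q$-profile is $\sigma(x_0)$), is gone. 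As $\sigma$ is a derangement, $\sigma(x_0)\ne x_0$ for every $x_0$, so removing any single node of $T$ falsifies the atom; thus every node is essential and $|A_T|=2^n$. This criticality step, the one place where the derangement hypothesis is used, is the only real obstacle; everything else is bookkeeping on the profiles.

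Finally I would invoke the counting argument established above. Fix a winning strategy $F$ for $(K,T)$ and $\varphi$, which exists by Lemma~\ref{winningstrategy} since $K,T\models\varphi$. That argument shows that any node $a$ with $F(\theta)\ne\{a\}$ for every subformula of $\varphi$ of the form $\neop\theta$ may be removed from $T$ while preserving $K,T\setminus\{a\}\models\varphi$; as $\varphi$ has only $\occv(\varphi)$ subformulas of this shape, at most $\occv(\varphi)$ nodes can violate the condition. Were $\occv(\varphi)<2^n=|A_T|$, some essential node $a$ would then satisfy both $K,T\setminus\{a\}\models\varphi$ and $K,T\setminus\{a\}\nmodels p_1\ldots p_n\subseteq q_1\ldots q_n$, contradicting that $\varphi$ defines the atom. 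Hence $\occv(\varphi)\ge 2^n$, and consequently $\varphi$ contains at least $2^n$ symbols.
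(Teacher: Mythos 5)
Your proof is correct and takes essentially the same route as the paper: the same identity-relation model with a cyclic-permutation team of size $2^n$, the same criticality observation $A_T = T$, and the same strategy-counting bound $\occv(\varphi) \geq |A_T|$ obtained from the node-removal argument. The only difference is that you spell out in detail the derangement-based verification that no proper subteam satisfies the atom, a step the paper leaves implicit by appeal to Example~\ref{example}.
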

\begin{proof}
Let $K=(W,R,V)$ be a Kripke model, where $R$ is the identity relation. For each $n$-ary inclusion atom $p_1\ldots p_n \subseteq q_1\ldots q_n$ there exists a team $T \subseteq \{w_{\bar{a}\bar{b} }\mid \bar{a}, \bar{b} \in \{0,1\}^n\}$ in $\|\varphi\|^K$, in which the tuples $\bar{a}$ and $\bar{b}$ form a cyclic permutation of $n$-tuples on the set $\{0,1\}$, as in Example~\ref{example}. Therefore, $T$ is of size $2^n$ and no proper subteam satisfies the inclusion atom. Thus, $A_T=T$. As  $\occv(\varphi) \geq  |A_T|$ if follows that $\occv(\varphi) \geq 2^{n}$.
\end{proof}


\begin{Remark}
The case of proposition symbols $p_i$ and $q_i$, $1 \leq i \leq n$, in the inclusion atoms is the most general case in the sense that arbitrary formulas $\varphi_i$ and $\psi_i$ can only reduce the options for $T$, as those depend on how the formulas are evaluated.
\end{Remark}

\section{Concluding remarks}

In this paper we characterised the expressive power of modal inclusion logic by closure under $k$-bi\-simulation, closure under unions and the empty team property. We also showed that the same characterisation holds for modal logic with nonemptiness operator, and therefore it has the same expressive power. Furthermore, we established a lower bound for the size of the translation from  $\MINC$ to $\ML( \triangledown)$: to describe the inclusion atom $p_1\ldots p_n \subseteq q_1\ldots q_n$ in terms of modal logic with nonemptiness operator a formula of at least $2^{n}$ occurrences of the operator $\triangledown$ is required. 

Note that in the characterisation of the expressive power of $\MINC$ only unary inclusion atoms of the form $\chi_{K,u}^k  \subseteq \chi_{K,v}^k$ were used. Thus, each $n$-ary inclusion atom of the form $\varphi_1 \ldots \varphi_n \subseteq \psi_1\ldots  \psi_n$ can be expressed in terms of unary inclusion atoms $\varphi \subseteq  \psi$. 
%
%
In \cite{HKMV15} it was shown that the computational complexity of the satisfiability problem of modal inclusion logic is complete for EXPTIME\footnote{This holds for \MINC with lax semantics. With strict semantics, the satisfiability problem of \MINC is complete for NEXPTIME, see \cite{HKMV15}.}. Using inclusion atoms of arbitrary arity is crucial for the proof of this result. This raises the question: what is the computational complexity of the satisfiability problem of modal logic with only unary inclusion atoms?

\bibliographystyle{eptcs}
\bibliography{eminc}

\end{document}